\newtheorem{theorem}{Theorem}
\newtheorem{corollary}{Corollary}
\newtheorem{lemma}{Lemma}
\newtheorem{remark}{Remark}
\newtheorem{example}{Example}
\newtheorem{definition}{Definition}
\newtheorem{assumption}{Assumption}
\newcommand{\E}{\mathrm{E}}
\newcommand{\A}{\mathrm{A}}
\newcommand{\D}{\mathrm{D}}
\newcommand{\dd}{\mathrm{d}}
\newcommand{\one}{\mathbbm{1}}
\newcommand{\st}{\mathrm{st}}
\newcommand{\ds}{\displaystyle}
\title{Queueing Analysis of GPU-Based Inference Servers with Dynamic Batching: 
A Closed-Form Characterization}
\author{Yoshiaki Inoue\footnote{Y.\ Inoue is with
Department of Information and Communications Technology, Graduate School of Engineering, 
Osaka University, Suita 565-0871, Japan. (E-mail: yoshiaki@comm.eng.osaka-u.ac.jp).
}}
\begin{document}
\maketitle

\begin{abstract}

GPU-accelerated computing is a key technology to realize 
high-speed inference servers using deep neural networks (DNNs). 
An important characteristic of GPU-based inference is that the
computational efficiency, in terms of the processing speed and energy
consumption, drastically increases by processing multiple jobs
together in a batch. In this paper, we formulate GPU-based inference
servers as a batch service queueing model with batch-size dependent
processing times. We first show that the energy efficiency of the
server monotonically increases with the arrival rate of inference
jobs, which suggests that it is energy-efficient to operate the
inference server under a utilization level as high as possible within
a latency requirement of inference jobs. We then derive a
closed-form upper bound for the mean latency, which provides a simple
characterization of the latency performance. Through simulation and
numerical experiments, we show that the exact value of the mean latency
is well approximated by this upper bound.
We further compare this upper bound with the latency curve measured in 
real implementation of GPU-based inference servers and we show that
the real performance curve is well explained by the derived simple formula.
\end{abstract}

\textit{Keywords}:
Machine learning inference, GPU servers, Dynamic batching,
Batch-service queue, Stochastic orders

\section{Introduction}
\label{sec:intro}

Deep neural networks (DNNs) have become increasingly popular 
tools to implement artificial intelligence (AI) related capability,
such as image classification and speech recognition, into mobile applications.
Because executing inference with DNNs is computationally heavy
for mobile devices, inference jobs are typically offloaded to a cloud
(or fog) server. From the server's perspective, many inference jobs
originating a large number of different mobile devices arrive,
and the server should process them within latency requirements of the
applications. To realize high-speed DNN inference, such a server usually 
utilizes the parallel computing capability of a GPU, which largely
accelerate the inference process \cite{nvidia,Xu2018}.

GPU-based inference has an interesting characteristic that 
batching many jobs drastically increases the computing efficiency in
terms of the processing speed and energy consumption 
\cite{nvidia,Crankshaw2017,Xu2018}.
Table \ref{table:resnet} shows measurement results of the computing
performance for two different GPUs (Tesla V100 and Tesla P4) and
precision (FP16/FP32 Mixed precision and INT8), which are reported in 
\cite{nvidia}. A DNN called ResNet-50 is employed for these
measurements, which is a winner of ImageNet Large Scale Visual
Recognition Competition 2015 (ILSVRC2015).
Note that the energy efficiency is represented as the number of
inference jobs per unit time which is able to be processed with
unit power (measured in Watt). We can also interpret this quantity as 
the average number of inference jobs processed with unit energy
(measured in Joule). In each case of Table \ref{table:resnet}, we see
that the throughput and the energy efficiency largely increase by
batching multiple jobs.

\begin{table}[tb]
\caption{Measurement results for inference performance using 
ResNet-50 reported in \cite{nvidia}. See \cite[Pages 23--24]{nvidia}
for more details of the measurement methodology.}
\label{table:resnet}
\begin{subtable}{0.98\textwidth}
\centering
\caption{Tesla V100 (Mixed precision).}
\begin{tabular}{|c|c|c|c|}
\hline
\multirow{2}{*}{Batch Size}
& Throughput
& Average Board
& Throughput/Watt
\\
& [images/sec] &  Power [Watt] & [images/sec/Watt]
\\
\hline
1 & 476 & 120 & 4 
\\
2 & 880 & 109 & 8.1
\\
4 & 1,631 & 132 & 12.4
\\
8 & 2,685 & 153 & 17.5
\\
64 & 5,877 & 274 & 21.4
\\
128 & 6,275 & 285 & 22
\\
\hline
\end{tabular}
\end{subtable}
\mbox{}\vspace{4ex}
\\
\centering
\begin{subtable}{0.98\textwidth}
\centering
\caption{Tesla P4 (INT8).}
\begin{tabular}{|c|c|c|c|}
\hline
\multirow{2}{*}{Batch Size}
& Throughput
& Average Board
& Throughput/Watt
\\
& [images/sec] &  Power [Watt] & [images/sec/Watt]
\\
\hline
1   & 569 & 44 & 12.9
\\
2   & 736 & 44 & 16.7
\\
4   & 974 & 49 & 19.9
\\
8   & 1,291 & 57 & 22.6
\\
64  & 1,677 & 63 & 26.6
\\
128 & 1,676 & 62 & 27
\\
\hline
\end{tabular}
\end{subtable}
\end{table}

Because of this characteristic of GPU-based inference, it is 
efficient for a server to combine multiple inference jobs 
arriving from different devices into a batch, and process them
simultaneously. Such a dynamic batching procedure is indeed supported
by DNN server-application libraries such as TensorFlow-Serving
\cite{tensorflow} and TensorRT Inference Server \cite{tensorrt}.

Despite the importance of GPU inference servers, there have been few
studies focusing on the mathematical analysis of their performance,
which may be due to the fact that the training process has been
given far more weight than the inference process in the machine-learning
community. In \cite{Cai2017}, a predictive framework called
NeuralPower is proposed, which provides a prediction of 
the inference time and energy consumption for each layer based on the
sparse polynomial regression. A prediction method for the inference
time is also proposed in \cite{Dai2019}, which looks up a database of
per-operator execution times and sums them up for all operations
involved. These previous studies focus only on the execution time of a
single inference, which corresponds to the service time in the
queueing theoretic terminology.
To the best of our knowledge, there are no previous studies 
on GPU inference servers that analyze \emph{the system latency}
including the queueing delay.

The main purpose of this paper is to introduce a queueing theoretic
perspective on GPU-based DNN inference systems with dynamic batching.
We formulate an inference server as a batch-service queueing
model with batch-size dependent processing times and we present a
novel analytical method for this model. 
As an initial study, this paper mainly focuses on the derivation of
\emph{a closed-form formula} that can characterize the latency
performance of GPU-based inference servers.
Although the analysis of
batch-service queues is a well-studied subject of the queueing theory
\cite{Bailey54,Briere89,Deb73,Downton55,Downton56,Jaiswal60,Medhi75,Neuts65,Neuts67,Neuts89},
most of them assume that the processing time distribution is
independent of the batch size.
This assumption does not hold for GPU-based inference servers because 
the processing times of inference jobs increase with the batch size.
Batch-service queues with batch-size dependent processing times
are analyzed in \cite{Neuts65}, \cite{Neuts67}, \cite[Section
4.2]{Neuts89}, where computational procedures to numerically obtain
several performance metrics are given. 
In particular, the matrix-analytic method developed in \cite{Neuts89}
provides a unified way to perform an algorithmic analysis of a wide
range of batch service queueing models. 
However, the main weakness of numerical approaches such as matrix
analysis methods is that they are algorithmic in nature and do not
yield closed-form formulas.
We note that batch-size dependent processing times make the system
dynamics complicated and it is difficult to obtain a closed-form
formula even in the M/M/1 model \cite{Curry1985}.

In this paper, we first show that the energy-efficiency of the system
monotonically increases with the arrival rate of inference jobs (i.e.,
the system load), by means of the stochastic comparison techniques
\cite{Mull02, Shak07}. This result suggests that it is
energy-efficient to operate the server under a utilization level as
high as possible within a latency requirement. We then derive a
closed-form upper bound of the mean latency, which provides a simple
characterization to  the latency performance of GPU-based inference
servers.

The key idea of our approach is to model the system as a batch-service
queueing model with \textit{infinite maximum batch size} and batch
processing times that linearly increase with the batch size,
where the latter assumption is a specific feature of GPU-based
inference servers as we will validate in this paper.
Note that the finiteness assumption on the maximum batch size is
essential in approaches based on the matrix-analytic method,
because it is a necessary condition for the system being formulated
by a Markov chain with block upper or lower Hessenberg transition
probability matrix. As we will see, however, the assumptions of the
infinite maximum batch size and linear batch processing times
enable us to derive a simple closed-form upper bound of the mean
latency. Furthermore, it is shown through numerical and simulation experiments
that the mean latency is quite well-approximated by this
closed-form upper bound, even for the case with finite maximum batch size.
Therefore, the derived closed-form upper bound can be regarded as a
simple and accurate approximation formula for the mean latency.
We also conduct experiments using real implementation of 
GPU inference servers based on MLPerf inference benchmark
\cite{mlperf} and we show that the simple formula we derive explains the
real performance curve quite well.

The rest of this paper is organized as follows.
In Section \ref{sec:model}, we introduce the mathematical model
considered in this paper. In Section \ref{sec:analysis},
we first show the monotonicity of the  energy-efficiency with respect
to the system load under a relatively general setting, and then derive
a closed-form upper bound for the mean latency assuming linear batch
processing times. In Section \ref{sec:experiments}, we perform
numerical evaluation to discuss the tightness of the
derived upper bound and to validate the obtained results with real
implementation of GPU servers. Finally, we conclude this paper in
Section \ref{sec:conclusion}.

\section{Model}
\label{sec:model}

We model an inference server with dynamic batching as a single-server
batch-service queueing model with infinite buffer. We assume that
arrivals of inference jobs follow a Poisson process with rate
$\lambda$. The server can process multiple inference
jobs simultaneously in a batch, and processing times of
batches are assumed to be independent following a probability
distribution depending on their batch sizes. 
Let $H^{[b]}(x)$ ($x \geq 0$, $b = 1,2,\ldots$) denote the cumulative
distribution function (CDF) of the processing time for a batch of size
$b$. Let $H^{[b]}$ ($b = 1,2,\ldots$) denote a generic random
variable following the CDF $H^{[b]}(x)$. We define $\mu^{[b]}$ 
($b = 1,2,\ldots$) as the mean throughput (the number of
inference jobs processed per time unit) for a batch size $b$:
\begin{equation}
\mu^{[b]} = \frac{b}{\E[H^{[b]}]}.
\label{eq:mu^b}
\end{equation}
Throughout this paper, we make the following assumption:
\begin{assumption}
\label{assumption:H^b}
\item[(i)] $\mu^{[b_1]} \leq \mu^{[b_2]}$ ($b_1 \leq b_2$), i.e.,
the mean throughput $\mu^{[b]}$ is non-decreasing with the batch size $b$.

\item[(ii)] $\lim_{b \to \infty} \mu^{[b]} > \lambda$.
\end{assumption}
Assumption \ref{assumption:H^b} (i) reflects the characteristic of
the GPU-based inference that the computing efficiency increases with
the batch size. Note that under Assumption \ref{assumption:H^b} (i),
the limit $\lim_{b \to \infty} \mu^{[b]}$ is always well-defined. 
Clearly, Assumption \ref{assumption:H^b} (ii) is a necessary (and
sufficient in the batching scheme described below) condition for the
system to be stable.

In order to construct a tractable model, we assume the following simple
dynamic batching scheme: whenever the server is idle and there is at
least one waiting job in the buffer, all of the waiting jobs
are incorporated into a single batch, and its processing is
immediately initiated. To be more specific, suppose that the server is
idle and the buffer is empty at time $0$. Let $B_n$ ($n = 1,2,\ldots$)
denote the size of the $n$th batch processed after time $0$. 
Also, let $L_{\D,n}$ ($n = 1,2,\ldots$) denote the number of waiting
inference jobs just before the departure of the $n$th batch. For
convenience, we define $L_{\D,0} = 0$. Under the batching scheme
described above, all waiting jobs are put into the next batch, 
so that $B_{n+1} = L_{\D,n}$ if $L_{\D,n} > 0$. 
If $L_{\D,n} = 0$, on the other hand, the $(n+1)$st batch contains only one
inference job which have arrived at the empty system. Therefore, it 
follows that
\begin{equation}
B_{n+1} = L_{\D,n} + \one_{\{L_{\D,n}=0\}},
\quad
n = 0,1,\ldots,
\label{eq:B-by-LD}
\end{equation}
where $\one_{\{\cdot\}}$ denotes an indicator function.

In the next section, we will derive analytical results for the
batch-service queueing system described so far.

\section{Queueing Analysis}
\label{sec:analysis}

\subsection{Preliminaries}

Let $A_n$ ($n = 1,2,\ldots$) denote the number of inference jobs 
arriving in the processing time of the $n$th batch.
By definition, the probability function of $A_n$ ($n = 1,2,\ldots$) is
given by
\begin{align}
\Pr(A_n = k \mid B_n = b) &= 
\int_0^{\infty} \frac{e^{-\lambda x}(\lambda x)^k}{k!}
\dd H^{[b]}(x),
\nonumber
\\
&=
a_k^{[b]},
\quad
k = 0,1,\ldots,
\label{eq:A_n-dist}
\end{align}
where $a_k^{[b]}$ ($k = 0,1,\ldots$, $b = 1,2,\ldots$)
is defined as
\begin{equation}
a_k^{[b]} = \int_0^{\infty} \frac{e^{-\lambda x}(\lambda x)^k}{k!}
\dd H^{[b]}(x).
\label{eq:A-dist}
\end{equation}
It is readily verified that the number of waiting jobs $L_{\D,n}$ 
($n = 1,2,\ldots$) at the $n$th processing completion satisfies
\[
L_{\D,n} = A_n,
\]
so that we obtain from (\ref{eq:B-by-LD}), 
\begin{equation}
B_{n+1} = A_n + \one_{\{A_n=0\}}.
\label{eq:B-by-A}
\end{equation}
It then follows from (\ref{eq:A_n-dist}) and (\ref{eq:B-by-A})
that the sequence of processed batch sizes $(B_n)_{n=1,2,\ldots}$
forms a discrete-time Markov chain on state space $\{1,2,\ldots\}$,
whose transition probability matrix $\bm{P}$ is given by
\begin{equation}
\bm{P} 
=
\begin{pmatrix}
a_0^{[1]} +  a_1^{[1]} & a_2^{[1]} & a_3^{[1]}& \cdots
\\                                           
a_0^{[2]} +  a_1^{[2]} & a_2^{[2]} & a_3^{[2]}& \cdots
\\                                           
a_0^{[3]} +  a_1^{[3]} & a_2^{[3]} & a_3^{[3]}& \cdots
\\
\vdots & \vdots & \vdots & \ddots
\end{pmatrix}.
\label{eq:P=}
\end{equation}
Note that this Markov chain is of GI/G/1-type, i.e, there is no skip-free
structure in the transition matrix $\bm{P}$. In general, it is
difficult to characterize the exact stationary distribution of the
GI/G/1-type Markov chain, and one has to resort to numerical
approximation methods such as the truncation techniques \cite{Gibson87,Tweedie98,Liu10}.
As we will see in Section \ref{ssec:special_case}, however,
we can obtain a closed-form upper bound of the mean latency, by
assuming linearly increasing batch processing times.

In the rest of this subsection, we derive some basic relations among
key performance metrics in steady state. Let $B$ denote a generic random
variable following the stationary distribution of $(B_n)_{n=1,2,\ldots}$.
Let $L$ denote a generic random variable for the stationary
number of inference jobs in the system at an arbitrary time instant.
Further let $A^{[b]}$ ($b=1,2,\ldots$) denote a generic random
variable following the probability function $a_k^{[b]}$ ($k =
0,1,\ldots$). It is readily verified from (\ref{eq:A-dist}) that
the first two moments of $A^{[b]}$ are given by
\begin{equation}
\E[A^{[b]}] = \lambda \E[H^{[b]}],
\quad
\E[(A^{[b]})^2] = \lambda^2 \E[(H^{[b]})^2].
\label{eq:Ab-moments}
\end{equation}

We define $\pi(z)$ and $a^{[b]}(z)$ ($|z| \leq 1$) as
the probability generating functions (PGFs) of $L$ and $A^{[b]}$:
\begin{align}
\pi(z) &= \E[z^L] = \sum_{n=0}^{\infty} \Pr(L=n) z^n,
\label{eq:pi(z)-def}
\\
a^{[b]}(z) &= \E[z^{A^{[b]}}] = \sum_{n=0}^{\infty} \Pr(A^{[b]}=n)
z^n.
\nonumber
\end{align}

\begin{lemma}

$\pi(z)$ ($|z| \leq 1$) satisfies
\begin{equation}
\pi(z) 
=
\sum_{b=1}^{\infty} 
\frac{b\Pr(B=b)}{\E[B]}
\cdot
\frac{1-z^b}{b(1-z)}
\cdot
a^{[b]}(z).
\label{eq:pi(z)}
\end{equation}

\end{lemma}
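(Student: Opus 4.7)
The plan is to obtain $\pi(z)$ via a renewal-reward / ergodic-theorem argument, taking as ``cycles'' the intervals $[S_n, S_{n+1})$ between consecutive batch-service initiations. Since the throughput of the stable system equals $\lambda$ while every cycle processes $\E[B]$ jobs on average, I would first observe that the mean cycle length equals $\E[S_{n+1}-S_n] = \E[B]/\lambda$.

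Next I would compute the per-cycle reward $R_n = \int_{S_n}^{S_{n+1}} z^{L(t)} \dd t$. Each cycle decomposes into the service of the $n$-th batch (duration $H_n$), during which $L(t) = B_n + N(S_n, t]$ with $N$ the Poisson arrival process, and a possibly vacuous idle phase $I_{n+1}$, during which $L(t) = 0$. Conditionally on $B_n = b$, using Fubini together with $\E[z^{N(0,t]}] = e^{-\lambda(1-z) t}$ and $a^{[b]}(z) = \E[e^{-\lambda(1-z) H^{[b]}}]$, and noting that $I_{n+1}$ is $\mathrm{Exp}(\lambda)$-distributed conditionally on $A_n = 0$ (an event of conditional probability $a_0^{[b]}$) and zero otherwise, I would obtain
\begin{equation*}
\E[R_n \mid B_n = b] = \frac{z^b (1 - a^{[b]}(z))}{\lambda(1-z)} + \frac{a_0^{[b]}}{\lambda}.
\end{equation*}
Averaging over the stationary distribution of $B$ and dividing by $\E[B]/\lambda$ then gives
\begin{equation*}
\pi(z) = \frac{\E[z^B] - \E[z^B a^{[B]}(z)]}{(1-z)\E[B]} + \frac{\Pr(A=0)}{\E[B]}.
\end{equation*}

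The last step is to collapse this expression into the claimed form. Taking the PGF of both sides of (\ref{eq:B-by-A}) in stationarity yields the identity $\E[z^B] = \E[a^{[B]}(z)] - (1-z)\Pr(A=0)$. Substituting it into the numerator above makes the $\Pr(A=0)/\E[B]$ term cancel exactly, leaving
\begin{equation*}
\pi(z) = \frac{\E[a^{[B]}(z)] - \E[z^B a^{[B]}(z)]}{(1-z)\E[B]} = \sum_{b=1}^{\infty} \frac{\Pr(B=b)(1-z^b) a^{[b]}(z)}{(1-z)\E[B]},
\end{equation*}
which is (\ref{eq:pi(z)}) after the trivial rewriting $1/\E[B] = b/(b\E[B])$.

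The main obstacle is less computational than organisational: keeping track of the idle contribution (which appears only when $A_n = 0$) so that the stationarity identity for $B$ can absorb it cleanly and leave the compact three-factor decomposition. Establishing $\E[S_{n+1}-S_n] = \E[B]/\lambda$ via a throughput argument and verifying that the ergodic theorem applies to the regenerative process carried by $(B_n)$ are the minor side-tasks.
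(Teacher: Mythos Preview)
Your argument is correct and takes a genuinely different route from the paper's. The paper proves the lemma from the \emph{job's} viewpoint: it invokes PASTA to identify $L$ with what an arriving job sees, then uses a sample-path up/down-crossing bijection to identify this with $\hat L_{\D}$, the count of later arrivals still present when a tagged job departs. Conditioning on the tagged job's batch size (which is length-biased, giving the factor $b\Pr(B=b)/\E[B]$) and on its uniform position within that batch (giving $(1-z^b)/(b(1-z))$), and noting that the remaining waiting jobs at departure are exactly the $A^{[b]}$ arrivals during processing (giving $a^{[b]}(z)$), the three-factor form of (\ref{eq:pi(z)}) is read off directly with an explicit probabilistic meaning for each factor.

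Your route is the \emph{time-average} viewpoint: you compute $\E\!\int z^{L(t)}\dd t$ over a semi-regenerative cycle between batch initiations and divide by the mean cycle length $\E[B]/\lambda$. This is equally valid (the process is semi-regenerative with embedded chain $(B_n)$, so the ergodic ratio formula applies), and your handling of the idle phase and the cancellation via the stationarity identity $\E[z^B]=\E[a^{[B]}(z)]-(1-z)\Pr(A=0)$ is clean. What the paper's approach buys is that the decomposition into length-biased batch size, uniform position, and Poisson arrivals is immediate and interpretive, with no algebraic cancellation needed; what your approach buys is that it is entirely mechanical once the cycle is set up, and it avoids the somewhat delicate sample-path bijection the paper invokes. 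Both proofs are short; yours trades probabilistic transparency for computational directness.
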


\begin{proof}

\begin{figure}[tb]
\centering
\includegraphics[scale=0.8]{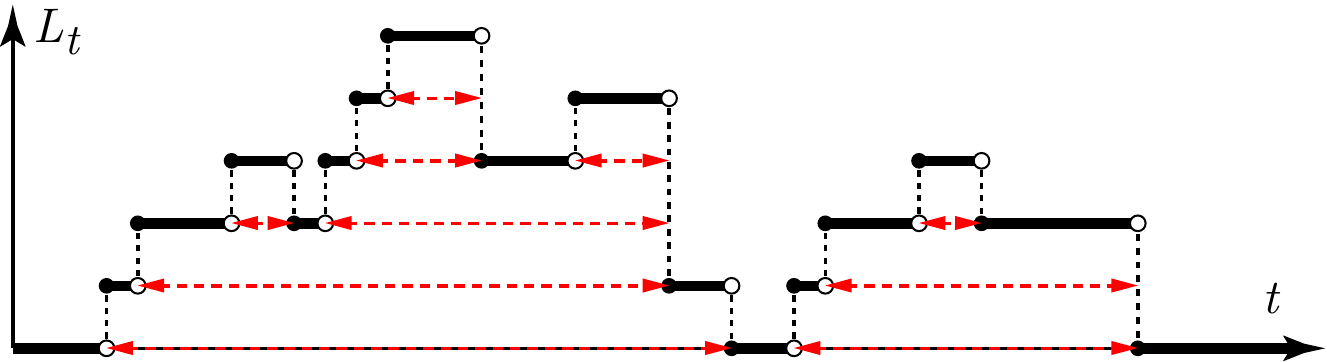}
\caption{A sample-path of the queue-length process $L_t$.}
\label{fig:levelcross}
\end{figure}

Let $L_t$ ($t \geq 0$) denote the number of inference jobs in the
system at time $t$. By definition, each sample-path of $(L_t)_{t \geq 0}$ 
is given by a step function with unit upward jumps (arrivals of
customers) and downward jumps of magnitude $B_n \in \{1,2,\ldots\}$
(completions of batch processing). For convenience, we assume that each
sample-path of $(L_t)_{t \geq 0}$ is constructed so that it is
right-continuous with left limits. Because the system is stable,
there is a one-to-one correspondence between an upward jump and
\textit{the contribution of an inference job to a downward jump} 
(see Fig.\ \ref{fig:levelcross}). To be more specific, 
let $t_n$ and $t_n'$ denote the arrival and departure times of the
$n$th arriving job ($n = 1,2,\ldots$). 
We define $\hat{L}_{\A,n}$ and $\hat{L}_{\D,n}$ as 
\begin{align}
\hat{L}_{\A,n} &= \lim_{\delta t \to 0+}L_{t_n-\delta t},
\nonumber
\\
\hat{L}_{\D,n} &= |\{k > n;\, t_k' = t_n'\}| + L_{t_n'},
\label{eq:hat-L-D-def}
\end{align}
i.e., $\hat{L}_{\A,n}$ denotes the number of inference jobs in the
system seen by the $n$th inference job on arrival, and $\hat{L}_{\D,n}$
denotes the number of inference jobs arrived in the sojourn time
of the $n$th inference job which are in the system just before its
departure. It is then verified that for each sample path,
there is a bijection  $\psi: \{1,2,\ldots\} \to \{1,2,\ldots\}$ such
that $\hat{L}_{\A,n} = \hat{L}_{\D,\psi(n)}$. 

Let $\hat{L}_{\A}$ (resp.\ $\hat{L}_{\D}$) denote a generic random
variable for $\hat{L}_{\A,n}$ (resp.\ $\hat{L}_{\D,n}$) in steady
state. Owing to PASTA and the observation above, we obtain
\begin{equation}
L =_{\st} \hat{L}_{\A} =_{\st} \hat{L}^{\D},
\label{eq:L=L^D}
\end{equation}
where $=_\st$ denotes equality in distribution.
We then consider the distribution of $\hat{L}^{\D}$ to prove 
(\ref{eq:pi(z)}).

Let $\hat{B}$ denote a generic random variable for the size of 
a batch in which a randomly chosen inference job is processed.
It is readily verified that $\hat{B}$ follows the length-biased batch
size distribution, i.e., 
\begin{equation}
\Pr(\hat{B}=b) = \frac{b\Pr(B=b)}{\E[B]}.
\label{eq:hatB-dist}
\end{equation}
We then obtain from (\ref{eq:hat-L-D-def}) and (\ref{eq:L=L^D}),
\[
\pi(z) = \E[z^{\hat{L}_{\D}}]
=
\sum_{b=1}^{\infty} 
\Pr(\hat{B}=b) \left(\sum_{k=0}^{b-1} \frac{1}{b} \cdot z^k\right)
a^{[b]}(z),
\]
which implies (\ref{eq:pi(z)}).
\end{proof}

Let $W$ denote the latency (sojourn time) of a randomly chosen
inference job. We define $H$ (resp.\ $\hat{H}$) as a generic random
variable for the processing time of a randomly chosen batch (resp.\ a
randomly chosen inference job). Note that the distributions of $H$ and $\hat{H}$
are given by (cf.\ (\ref{eq:hatB-dist}))
\begin{align}
\Pr(H \leq x)
&=
\sum_{b=1}^{\infty} \Pr(B=b) \Pr(H^{[b]} \leq x),
\label{eq:H-dist}
\\
\Pr(\hat{H} \leq x) 
&= 
\sum_{b=1}^{\infty} \frac{b\Pr(B=b)}{\E[B]}
\cdot
\Pr(H^{[b]} \leq x).
\label{eq:hatH-dist}
\end{align}

\begin{lemma}
\label{lemma:EW}
The mean latency $\E[W]$ is given by
\begin{equation}
\E[W] 
= 
\frac{\E[B^2]-\E[B]}{2\lambda\E[B]}
+
\E[\hat{H}].
\label{eq:EW}
\end{equation}
\end{lemma}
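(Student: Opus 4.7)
My plan is to compute the mean queue length $\E[L]$ by differentiating the PGF $\pi(z)$ obtained in the preceding lemma and then to invoke Little's law, $\E[L]=\lambda\E[W]$.

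First, I would rewrite the geometric factor in (\ref{eq:pi(z)}) as
$\frac{1-z^b}{b(1-z)} = \frac{1}{b}\sum_{k=0}^{b-1} z^k$,
which makes it immediate that this factor takes the value $1$ at $z=1$ and has derivative $\frac{1}{b}\sum_{k=0}^{b-1}k = \frac{b-1}{2}$ there. Combined with $a^{[b]}(1)=1$ and $(a^{[b]})'(1) = \E[A^{[b]}] = \lambda\E[H^{[b]}]$ (which follows from (\ref{eq:Ab-moments})), applying the product rule term-by-term in (\ref{eq:pi(z)}) yields
\begin{equation*}
\E[L] \;=\; \pi'(1) \;=\; \sum_{b=1}^{\infty}\frac{b\Pr(B=b)}{\E[B]}\left(\frac{b-1}{2} + \lambda\E[H^{[b]}]\right).
\end{equation*}
The first piece collapses to $\frac{1}{2\E[B]}\sum_{b=1}^{\infty} b(b-1)\Pr(B=b) = (\E[B^2]-\E[B])/(2\E[B])$, and the second piece equals $\lambda\E[\hat{H}]$ by (\ref{eq:hatH-dist}). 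Dividing by $\lambda$ and applying Little's law produces exactly (\ref{eq:EW}).

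The calculation itself is routine; the only mildly delicate point is justifying term-by-term differentiation of the infinite sum in (\ref{eq:pi(z)}) at $z=1$. Under the stability condition in Assumption \ref{assumption:H^b}(ii), $\E[L]$ is finite, so $\pi(z)$ admits a well-defined left derivative at $z=1$. Each summand is non-negative and non-decreasing on $[0,1]$ with a well-defined limit of its derivative as $z\uparrow 1$, so Abel's theorem together with monotone convergence legitimizes the interchange of sum and differentiation. Once this is settled, the derivation reduces to the bookkeeping described above, and no deeper obstacle appears.
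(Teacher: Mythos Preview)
Your proof is correct and follows essentially the same route as the paper: differentiate the PGF formula (\ref{eq:pi(z)}) term by term at $z=1$, identify the two resulting sums as $(\E[B^2]-\E[B])/(2\E[B])$ and $\lambda\E[\hat{H}]$, and then apply Little's law. Your additional remarks on justifying the interchange of sum and differentiation go a bit beyond what the paper states, but they do not change the argument.
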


\begin{proof}
Taking the derivative of (\ref{eq:pi(z)}) and letting $z \to 1$, 
we have
\begin{align}
\E[L] 
&= 
\sum_{b=1}^{\infty} 
\frac{b\Pr(B=b)}{\E[B]}
\left(
\frac{b-1}{2} + \lambda \E[H^{[b]}]
\right)
\nonumber
\\
&=
\frac{\E[B^2]-\E[B]}{2\E[B]}
+
\lambda \E[\hat{H}],
\label{eq:EL}
\end{align}
where we used (\ref{eq:Ab-moments}) in the first equality and (\ref{eq:hatH-dist})
in the second equality. (\ref{eq:EW}) thus follows from Little's law $\E[L]=\lambda \E[W]$.
\end{proof}
\begin{remark}
We can verify that the first term (resp.\ the second term) on the
right-hand side of (\ref{eq:EW}) represents the mean waiting (resp.\
processing) time of a randomly chosen inference job.
\end{remark}

\subsection{Monotonicity of the Energy Efficiency}
\label{ssec:energy}

In this subsection, we show that the larger the system load, the
more energy efficient this system is, under some additional assumptions.
Let $c^{[b]}$ ($b = 1,2,\ldots$) denote the amount of
energy consumed for processing a batch of size $b$.
$c^{[b]}$ is calculated from Table \ref{table:resnet} by the
product of the average board power and the batch processing time
(i.e., the batch size divided by the throughput). For each case in
Table \ref{table:resnet}, $c^{[b]}$ is well-fitted by a linear
function (with the least squares method, we have
the coefficient of determination $R^2 \simeq 0.99978$
for Tesla V100 and $R^2 \simeq 0.99998$ for Tesla P4). 
See Fig.\ \ref{fig:c-resnet} for $c^{[b]}$ plotted as a function of $b$. 

\begin{figure}[tb]
\centering
\includegraphics{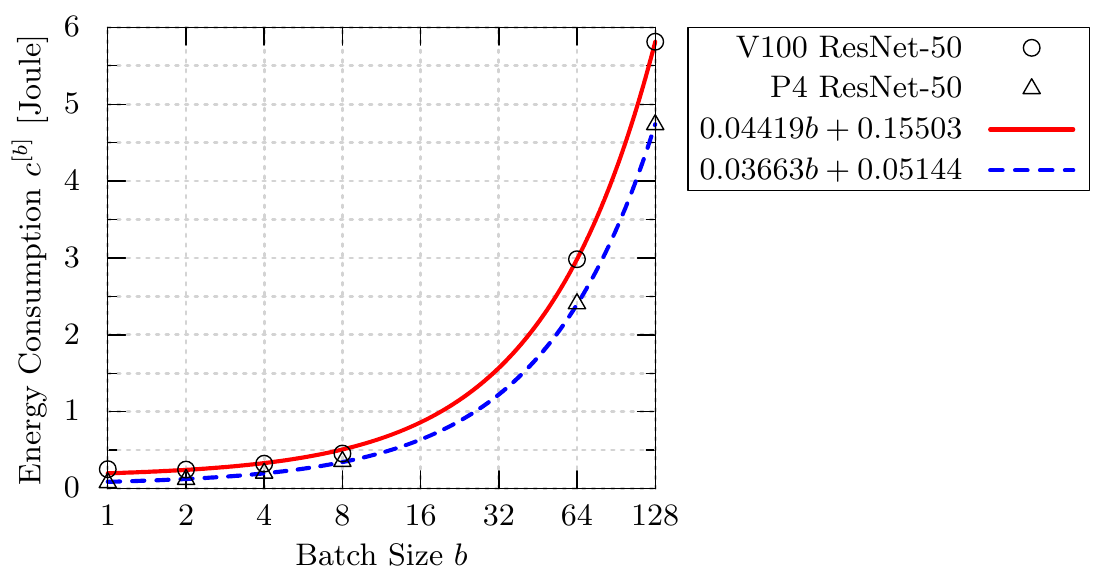}
\caption{The amount of energy consumption for processing a batch 
(calculated from Table \ref{table:resnet}).}
\label{fig:c-resnet}
\end{figure}

We thus make the following assumption on $c^{[b]}$:

\begin{assumption}
\label{assumption:c^b}
$c^{[b]}$ ($b = 1,2,\ldots$) is given by
\begin{equation}
c^{[b]} = \beta b + c_0,
\label{eq:c^b}
\end{equation}
for some $\beta > 0$ and $c_0 \geq 0$.
\end{assumption}

In steady state, the server processes $\lambda/\E[B]$ batches 
per unit time with energy consumption $\sum_{b=1}^{\infty}
\Pr(B=b)c^{[b]}$ on average. We then define the average energy
efficiency $\eta$ of the system as
\begin{equation}
\eta := \frac{\lambda}{\ds \frac{\lambda}{\E[B]} \sum_{b=1}^{\infty}
\Pr(B=b)c^{[b]}},
\label{eq:eta-def}
\end{equation}
i.e., the mean number of inference jobs processed with unit energy.
Under Assumption \ref{assumption:c^b}, (\ref{eq:eta-def}) is rewritten as
\begin{equation}
\eta = \frac{1}{\ds \beta + c_0/\E[B]}.
\label{eq:eta}
\end{equation}

In what follows, we show that the energy efficiency $\eta$ is 
non-decreasing with respect to the arrival rate $\lambda$.
To establish this monotonicity result for $\eta$,
we need an additional assumption on the batch processing 
time distribution $H^{[b]}(x)$ ($b = 1,2,\ldots$): 

\begin{definition}[{\cite[Eq.\ (1.A.1)]{Shak07}}]
Let $X$ and $Y$ denote non-negative random variables.
$X$ is said to be smaller than $Y$ in the usual stochastic order
if and only if
\[
\Pr(X > x) \leq \Pr(Y > x),
\quad
\mbox{for all $x \geq 0$}.
\]
\end{definition}

\begin{remark}[{\cite[Eq.\ (1.A.7)]{Shak07}}]
\label{remark:st-order}
$X \leq_{\st} Y$ holds if and only if
\[
\E[\phi(X)] \leq \E[\phi(Y)],
\]
for any non-decreasing function $\phi(x)$ ($x \geq 0$) provided the
expectations exist. In particular, $X \leq_{\st} Y \Rightarrow
\E[X] \leq \E[Y]$.
\end{remark}
\begin{assumption}
\label{assumption:H^b-st-order}
$H^{[b]} \leq_{\st} H^{[b']}$ holds for any $b \leq b'$.
\end{assumption}

Although Assumption \ref{assumption:H^b-st-order} is a strong
assumption on the batch processing time distribution,
it is reduced to the condition about only their the mean value in
several probability distributions, as shown in the following example:

\begin{example}

In the following cases, we have 
$\E[H^{[b]}] \leq \E[H^{[b']}] \Rightarrow H^{[b]} \leq_{\st}
H^{[b']}$ (cf.\ Remark \ref{remark:st-order}):

\begin{itemize}
\item[(a)] $H^{[b]}$ ($b = 1,2,\ldots$) follows an exponential distribution, i.e., 
\[
\Pr(H^{[b]} > x) = e^{-x/\E[H^{[b]}]},
\quad
x \geq 0.
\]

\item[(b)] $H^{[b]}$ ($b = 1,2,\ldots$) follows a gamma distribution
with a fixed coefficient of variation $c$, i.e.,
\[
\Pr(H^{[b]} > x) = 1 - \frac{\gamma(1/c^2, x/(c^2\E[H^{[b]}] )}{\Gamma(1/c^2)},
\quad
x \geq 0.
\]
where $\Gamma(x)$ and $\gamma(x,y)$ denotes the gamma function and the
lower incomplete gamma function. 

\item[(c)] $H^{[b]}$ ($b = 1,2,\ldots$) takes a constant value,
i.e., 
\[
\Pr(H^{[b]} > x) = \one_{\{\E[H^{[b]}] > x\}},
\quad
x \geq 0.
\]
\end{itemize}
\end{example}

Let $B^{\langle\lambda\rangle}$ and $\eta^{\langle\lambda\rangle}$ 
($\lambda > 0$) denote the stationary batch size and the energy
efficiency represented as functions of the arrival rate $\lambda$.

\begin{theorem}
\label{theorem:B-increasing}
Under Assumption \ref{assumption:H^b-st-order},
the stationary batch size $B^{\langle\lambda\rangle}$ 
($\lambda > 0$) increases with the arrival rate $\lambda$ in the usual
stochastic order, i.e.,
\begin{equation}
B^{\langle\lambda_1\rangle} \leq_{\st} B^{\langle\lambda_2\rangle},
\quad
\mbox{for any $\lambda_1 \leq \lambda_2$}.
\label{eq:B-increasing}
\end{equation}
\end{theorem}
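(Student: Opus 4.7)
The plan is to build a monotone coupling of the chains $(B_n^{\langle\lambda_1\rangle})_{n\geq 0}$ and $(B_n^{\langle\lambda_2\rangle})_{n\geq 0}$ on a common probability space and then pass to the stationary limit. The key observation that organizes everything is that the recursion \eqref{eq:B-by-A} can be rewritten in the compact form $B_{n+1}=\max(A_n,1)$, so $B_{n+1}$ is a non-decreasing function of $A_n$; moreover, conditional on $B_n=b$, the random variable $A_n$ is the number of events of a rate-$\lambda$ Poisson process in an independent time interval of length $H^{[b]}$.

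The first step is a one-step comparison: fix $\lambda_1\leq\lambda_2$ and suppose inductively that on the coupled space $B_n^{\langle\lambda_1\rangle}\leq B_n^{\langle\lambda_2\rangle}$. Assumption \ref{assumption:H^b-st-order} then provides a coupling $(H_1,H_2)$ of $H^{[B_n^{\langle\lambda_1\rangle}]}$ and $H^{[B_n^{\langle\lambda_2\rangle}]}$ with $H_1\leq H_2$ pointwise. I would realize $A_n^{\langle\lambda_2\rangle}$ as the count $N([0,H_2])$ of a rate-$\lambda_2$ Poisson process $N$ drawn independently of $(H_1,H_2)$, and then obtain $A_n^{\langle\lambda_1\rangle}$ by independently thinning $N$ with retention probability $\lambda_1/\lambda_2$ and counting only the surviving points in $[0,H_1]$. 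The latter count is, conditionally on $H_1$, Poisson with mean $\lambda_1 H_1$, so it has exactly the required mixed-Poisson marginal $a_k^{[B_n^{\langle\lambda_1\rangle}]}$; by construction it is pointwise bounded above by $N([0,H_2])$. Applying the non-decreasing map $a\mapsto\max(a,1)$ then preserves the inequality and delivers $B_{n+1}^{\langle\lambda_1\rangle}\leq B_{n+1}^{\langle\lambda_2\rangle}$.

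The second step is to iterate: starting both chains from the same state (e.g., $B_0=1$), the construction above yields by induction a joint realization under which $B_n^{\langle\lambda_1\rangle}\leq B_n^{\langle\lambda_2\rangle}$ for every $n$. By Remark \ref{remark:st-order} this gives $\E[\phi(B_n^{\langle\lambda_1\rangle})]\leq\E[\phi(B_n^{\langle\lambda_2\rangle})]$ for every bounded non-decreasing $\phi$. Assumption \ref{assumption:H^b}(ii) makes both Markov chains positive recurrent, so each $B_n^{\langle\lambda_i\rangle}$ converges in distribution to $B^{\langle\lambda_i\rangle}$; passing $n\to\infty$ in the above inequality and using Remark \ref{remark:st-order} once more yields \eqref{eq:B-increasing}.

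The main obstacle I anticipate is the careful bookkeeping in the one-step coupling: one must verify that the thinning construction produces $A_n^{\langle\lambda_1\rangle}$ with the correct mixed-Poisson marginal $a_k^{[\cdot]}$ while simultaneously enforcing the pointwise domination. Once this is written down explicitly, the remainder is standard stochastic-monotonicity machinery in the sense of \cite{Mull02,Shak07}. A minor point worth stating is that convergence to stationarity at the end of the argument is exactly why Assumption \ref{assumption:H^b}(ii) is needed here.
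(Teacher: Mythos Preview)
Your proof is correct. Both your argument and the paper's rest on the same two comparison facts---that $A^{[b]}$ is stochastically increasing in $b$ (via Assumption~\ref{assumption:H^b-st-order}) and in $\lambda$ (since Poisson counts increase in the rate)---together with the observation that $B_{n+1}=\max(A_n,1)$ is monotone in $A_n$. The difference lies in how these facts are deployed. The paper verifies the two abstract conditions \eqref{eq:P-monotone} and \eqref{eq:P-comparison} on the transition kernels and then invokes the standard stochastic-monotonicity criterion for Markov chains from \cite[pp.~186--187]{Mull02} to conclude directly that the stationary laws are ordered; no explicit coupling or limit is written out. You instead construct the coupling by hand, via Poisson thinning and a Strassen-type coupling of the processing times, iterate from a common initial state, and pass to the stationary limit. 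Your route is more self-contained and produces an explicit sample-path domination $B_n^{\langle\lambda_1\rangle}\le B_n^{\langle\lambda_2\rangle}$ for every $n$; the paper's route is shorter because the coupling-and-limit machinery is already packaged inside the cited theorem. Neither approach requires any ingredient the other lacks, and the thinning construction you describe is precisely the concrete realization underlying the abstract kernel comparison the paper invokes.
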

\begin{proof}
Let $P^{{\langle\lambda_m\rangle}}$ ($m=1,2$) denote the
transition probability matrix of $(B_n)_{n=1,2,\ldots}$ 
given $\lambda = \lambda_m$, and let $p_{i,j}^{{\langle\lambda_m\rangle}}$ 
($i,j = 1,2,\ldots$) denote the $(i,j)$th element of $P^{{\langle\lambda_m\rangle}}$.
To prove (\ref{eq:B-increasing}), it is sufficient to show
that in the sense of usual stochastic order,
the probability distribution $p_{i,\cdot}^{{\langle\lambda_1\rangle}}$
increases with $i$, and $p_{i,\cdot}^{{\langle\lambda_1\rangle}}$ is
smaller than $p_{i,\cdot}^{{\langle\lambda_2\rangle}}$ 
\cite[Pages 186--187]{Mull02}, i.e.,
\begin{equation}
\sum_{j=k}^{\infty} p_{i,j}^{\langle\lambda_1\rangle}
\leq
\sum_{j=k}^{\infty} p_{i',j}^{\langle\lambda_1\rangle},
\;\;
i \leq i',\,
k = 1,2,\ldots,
\label{eq:P-monotone}
\end{equation}
and
\begin{equation}
\sum_{j=k}^{\infty} p_{i,j}^{\langle\lambda_1\rangle}
\leq
\sum_{j=k}^{\infty} p_{i,j}^{\langle\lambda_2\rangle},
\;\;
i = 0,1,\ldots,\,
k = 1,2,\ldots.
\label{eq:P-comparison}
\end{equation}
Using (\ref{eq:P=}), we rewrite (\ref{eq:P-monotone}) and
(\ref{eq:P-comparison}) as
\begin{equation}
\sum_{j=k}^{\infty} a_j^{[i],\langle\lambda_1\rangle}
\leq
\sum_{j=k}^{\infty} a_j^{[i'],\langle\lambda_1\rangle},
\;\;
i \leq i',\,
k = 2,3,\ldots,
\label{eq:P-monotone-a}
\end{equation}
and
\begin{equation}
\sum_{j=k}^{\infty} a_j^{[i],\langle\lambda_1\rangle}
\leq
\sum_{j=k}^{\infty} a_j^{[i],\langle\lambda_2\rangle},
\;\;
i = 1,2,\ldots,\,
k = 2,3,\ldots,
\label{eq:P-comparison-a}
\end{equation}
where $a_j^{[i],\langle\lambda_m\rangle}$ ($m=1,2$) is defined as 
(cf.\ (\ref{eq:A-dist}))
\[
a_j^{[i],\langle\lambda_m\rangle}
:=
\int_0^{\infty} \frac{e^{-\lambda_m x}(\lambda_m x)^j}{j!}
\dd H^{[i]}(x).
\]
Let $A^{[i],\langle\lambda_m\rangle}$ ($m=1,2$) denote a generic
random variable satisfying $\Pr(A^{[i],\langle\lambda_m\rangle} = j) 
= a_j^{[i],\langle\lambda_m\rangle}$ ($j = 0,1,\ldots$).

Note that the Poisson distribution is increasing in the usual
stochastic order with respect to its mean \cite[Example 8.A.2]{Shak07},
so that we have from Assumption \ref{assumption:H^b-st-order} and 
\cite[Theorem 1.A.6]{Shak07},
\[
A^{[i],\langle\lambda_1\rangle} 
\leq_{\st}
A^{[i'],\langle\lambda_1\rangle},
\quad
i \leq i',
\]
and from \cite[Theorem 1.A.3 (d)]{Shak07},
\[
A^{[i],\langle\lambda_1\rangle} 
\leq_{\st}
A^{[i],\langle\lambda_2\rangle},
\quad
\lambda_1 \leq \lambda_2,\, i = 0,1,\ldots.
\]
Therefore, we obtain (\ref{eq:P-monotone-a}) and
(\ref{eq:P-comparison-a}), which completes the proof.
\end{proof}

\begin{corollary}
\label{corollary:sigma-increase}
Under Assumptions \ref{assumption:c^b} and
\ref{assumption:H^b-st-order},
the energy efficiency $\eta^{\langle\lambda\rangle}$ is
non-decreasing with the arrival rate $\lambda$.
\end{corollary}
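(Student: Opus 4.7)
The plan is to reduce the corollary to Theorem \ref{theorem:B-increasing} via the closed-form expression (\ref{eq:eta}) for the energy efficiency. The key observation is that under Assumption \ref{assumption:c^b}, the energy efficiency $\eta^{\langle\lambda\rangle}$ depends on the stationary batch-size distribution only through its mean $\E[B^{\langle\lambda\rangle}]$, and it does so monotonically. So the corollary will follow once we translate the stochastic monotonicity of $B^{\langle\lambda\rangle}$ in $\lambda$ into monotonicity of its mean.

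First, I would write (\ref{eq:eta-def}) under Assumption \ref{assumption:c^b} and recover the simple form (\ref{eq:eta}),
\begin{equation*}
\eta^{\langle\lambda\rangle} = \frac{1}{\beta + c_0/\E[B^{\langle\lambda\rangle}]},
\end{equation*}
and observe that since $\beta > 0$ and $c_0 \geq 0$, the map $m \mapsto 1/(\beta + c_0/m)$ is non-decreasing on $(0,\infty)$. Hence it suffices to show that $\E[B^{\langle\lambda\rangle}]$ is non-decreasing in $\lambda$.

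Next, I would invoke Theorem \ref{theorem:B-increasing}, which under Assumption \ref{assumption:H^b-st-order} gives $B^{\langle\lambda_1\rangle} \leq_{\st} B^{\langle\lambda_2\rangle}$ for $\lambda_1 \leq \lambda_2$. Applying Remark \ref{remark:st-order} with the non-decreasing function $\phi(x) = x$ immediately yields $\E[B^{\langle\lambda_1\rangle}] \leq \E[B^{\langle\lambda_2\rangle}]$, and therefore $\eta^{\langle\lambda_1\rangle} \leq \eta^{\langle\lambda_2\rangle}$.

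There is no real obstacle here, since Theorem \ref{theorem:B-increasing} does all of the heavy lifting; the only minor point to be careful about is to note that $\E[B^{\langle\lambda\rangle}] > 0$ (which is automatic since $B \geq 1$), so that division by $\E[B^{\langle\lambda\rangle}]$ in (\ref{eq:eta}) is well defined and the monotonicity of $m \mapsto 1/(\beta + c_0/m)$ applies on the relevant range. The proof should therefore fit in just a few lines.
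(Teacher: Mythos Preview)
Your proposal is correct and follows exactly the approach of the paper, which simply states that the corollary ``immediately follows from Theorem~\ref{theorem:B-increasing} and (\ref{eq:eta}).'' You have merely made explicit the two steps implicit in that sentence: that stochastic ordering of $B^{\langle\lambda\rangle}$ implies ordering of its mean (via Remark~\ref{remark:st-order}), and that $\eta^{\langle\lambda\rangle}$ is a non-decreasing function of $\E[B^{\langle\lambda\rangle}]$ by the form of (\ref{eq:eta}).
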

\begin{proof}
Corollary \ref{corollary:sigma-increase} immediately follows from
Theorem \ref{theorem:B-increasing} and (\ref{eq:eta}).
\end{proof}

Corollary \ref{corollary:sigma-increase} suggests that 
it is energy-efficient to operate the inference server under a
utilization level as high as possible within a latency requirement of
inference jobs. In the following subsection, we derive a
closed-form upper bound of the mean latency, assuming 
linearly increasing batch processing times.

\subsection{Deterministic Linear Batch Processing Times}
\label{ssec:special_case}

\begin{figure}[tb]
\centering
\includegraphics{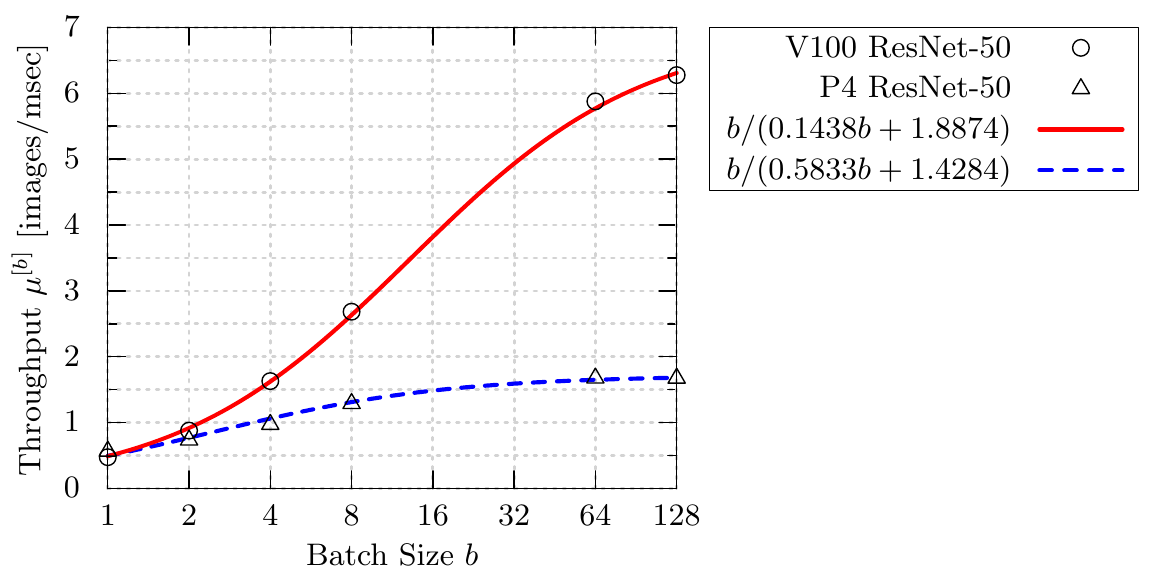}
\caption{Throughput characteristics in Table \ref{table:resnet} 
and corresponding curves plotted by Eq.\ (\ref{eq:mu^b-special}).
}
\label{fig:mu_resnet}
\end{figure}

In Lemma \ref{lemma:EW}, we showed that the mean latency $\E[W]$ is
given in terms of the stationary distribution of the Markov chain
$(B_n)_{n=1,2,\ldots}$ of batch sizes. As mentioned above, 
an exact analysis of the stationary distribution of the GI/GI/1 type
Markov chain $(B_n)_{n=1,2,\ldots}$ is difficult, and only its
numerical approximation is known in the literature.

In this subsection, it is shown that we can obtain a closed-form upper
bound of the mean latency $\E[W]$ by assuming a specific structure in batch
processing times. Specifically, we make the following assumption
throughout this subsection:

\begin{assumption}
\label{assumption:special}
\item The batch processing time $H^{[b]}$ ($b = 1,2,\ldots$) takes a
constant value equal to $\tau^{[b]}$, which is given by 
\begin{equation}
\tau^{[b]} = \alpha b + \tau_0,
\label{eq:tau^b}
\end{equation}
for some $\alpha > 0$ and $\tau_0 \geq 0$.
\end{assumption}

The deterministic distribution is a natural choice 
to model batch inference times because most DNNs 
take a vector of fixed size (input dimension times the batch size) as
its input, and the output is computed by applying a predefined
sequence of operations to it such as matrix multiplications and 
non-linear activation functions, so that the computational steps are
invariant regardless of the input vector. 
Furthermore, we see that the linearity assumption (\ref{eq:tau^b})
is consistent with the measurement results in Table
\ref{table:resnet}: with the least squares method, we have 
the coefficient of determination $R^2 \simeq 0.99975$ 
(resp.\ $R^2 \simeq 0.99986$) with 
$\alpha = 0.1438$ and $\tau_0=1.8874$ 
(resp.\ $\alpha = 0.5833$, $\tau_0 = 1.4284$)
for batch processing times calculated from the data in Table
\ref{table:resnet} (a) (resp.\ Table \ref{table:resnet} (b))
by dividing batch sizes by throughputs (cf.\ (\ref{eq:mu^b})).
Note that under Assumption \ref{assumption:special},
the throughput $\mu^{[b]}$ ($b = 1,2,\ldots$) is written as
\begin{equation}
\mu^{[b]} = \frac{b}{\alpha b + \tau_0}.
\label{eq:mu^b-special}
\end{equation}
As shown in Fig.\ \ref{fig:mu_resnet}, the throughput characteristics
in Table \ref{table:resnet} are well-fitted by this simple rational function.
More validation results on this assumption will be provided 
later in Section \ref{sec:experiments}, using real implementation of
GPU inference servers.

We can readily verify from (\ref{eq:mu^b-special}) that 
Assumption \ref{assumption:special} ensures Assumption
\ref{assumption:H^b} (i). Furthermore, (\ref{eq:mu^b-special}) implies
\[
\lim_{b \to \infty} \mu^{[b]} = \frac{1}{\alpha},
\]
so that the stability condition stated in Assumption
\ref{assumption:H^b} (ii) is rewritten as
\begin{equation}
\rho := \lambda \alpha < 1.
\label{eq:rho}
\end{equation}
In view of this relation, the normalized load $\rho$ 
represents the ratio of the arrival rate to the server's processing
capacity, which corresponds to the traffic intensity in ordinary
single-server queueing models.

Assumption \ref{assumption:special} simplifies the analysis
mainly because under this assumption,
$\E[H]$, $\E[H^2]$, and $\E[\hat{H}]$ 
(see (\ref{eq:H-dist}) and (\ref{eq:hatH-dist}))
are given in terms of the first two moments $\E[B]$ and $\E[B^2]$ of
the stationary batch size distribution:
\begin{align}
\E[H] 
&=
\alpha\E[B]+\tau_0,
\label{eq:EH-special}
\\
\E[H^2]
&=
\alpha^2 \E[B^2] + 2 \alpha \tau_0 \E[B] +  \tau_0^2,
\label{eq:EH2-special}
\\
\E[\hat{H}] 
&= 
\alpha \cdot \frac{\E[B^2]}{\E[B]} + \tau_0.
\label{eq:hatH-special}
\end{align}

\begin{lemma}
Let $A$ denote a generic random variable for $A_n$ in steady state
(cf.\ (\ref{eq:A_n-dist})):
\[
\Pr(A = k) = \sum_{b=1}^{\infty} \Pr(B=b) a_k^{[b]},
\quad
k = 0,1,\ldots.
\]
Under Assumption \ref{assumption:special},
$\E[B]$ and $\E[B^2]$ are given in terms of $\Pr(A=0)$ by
\begin{align}
\E[B] &= \frac{\lambda \tau_0 + \Pr(A=0)}{1-\lambda \alpha},
\label{eq:EB-special}
\\
\E[B^2] 
&=
\frac{(1+2\lambda^2\alpha\tau_0)\E[B]  + \lambda^2 \tau_0^2}
{1-\lambda^2 \alpha^2}.
\label{eq:EB2-special}
\end{align}
\end{lemma}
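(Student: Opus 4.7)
The plan is to reduce both formulas to the steady-state identity obtained by passing (\ref{eq:B-by-A}) to the limit, namely
$B \stackrel{\st}{=} A + \one_{\{A=0\}}$,
and then to exploit the fact that, under Assumption \ref{assumption:special}, the conditional distribution of $A$ given $B=b$ is Poisson with parameter $\lambda\tau^{[b]}$, so that all quantities involving $A$ can be expressed through $\E[H]$ and $\E[H^2]$, and hence through $\E[B]$ and $\E[B^2]$ via (\ref{eq:EH-special})--(\ref{eq:EH2-special}).

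For (\ref{eq:EB-special}), I would take expectations on both sides of $B = A + \one_{\{A=0\}}$ to get $\E[B] = \E[A] + \Pr(A=0)$. Conditioning on $B$ and using that $A^{[b]}$ is Poisson with mean $\lambda\tau^{[b]}$, one has $\E[A] = \lambda\E[H] = \lambda\alpha\E[B] + \lambda\tau_0$ by (\ref{eq:EH-special}). Solving the resulting linear equation in $\E[B]$ yields (\ref{eq:EB-special}); the stability condition $\rho = \lambda\alpha < 1$ from (\ref{eq:rho}) guarantees that the denominator is positive.

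For (\ref{eq:EB2-special}), the key observation is that squaring $B = A + \one_{\{A=0\}}$ collapses neatly: the cross term $2A\,\one_{\{A=0\}}$ vanishes identically and $\one_{\{A=0\}}^2 = \one_{\{A=0\}}$, so $B^2 = A^2 + \one_{\{A=0\}}$ and hence $\E[B^2] = \E[A^2] + \Pr(A=0)$. Next I would compute $\E[A^2]$ by conditioning: since $A\mid H{=}x$ is Poisson$(\lambda x)$, we have $\E[A^2\mid H] = \lambda H + \lambda^2 H^2$, and averaging gives $\E[A^2] = \lambda\E[H] + \lambda^2\E[H^2]$. Substituting (\ref{eq:EH-special}) and (\ref{eq:EH2-special}) produces
\[
\E[B^2] = \lambda\alpha\E[B] + \lambda\tau_0 + \lambda^2\bigl(\alpha^2\E[B^2] + 2\alpha\tau_0\E[B] + \tau_0^2\bigr) + \Pr(A=0).
\]
The simplification that makes the final expression clean is to notice that $\lambda\alpha\E[B] + \lambda\tau_0 + \Pr(A=0) = \E[B]$ by (\ref{eq:EB-special}) itself; this collapses the above identity into $(1-\lambda^2\alpha^2)\E[B^2] = (1+2\lambda^2\alpha\tau_0)\E[B] + \lambda^2\tau_0^2$, from which (\ref{eq:EB2-special}) follows.

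The only step that requires care is the second-moment computation for $A$: one must use the correct Poisson second moment $\E[A^2\mid H=x] = \lambda x + (\lambda x)^2$ rather than only the $(\lambda x)^2$ piece, and then deploy (\ref{eq:EB-special}) to cancel the first-order terms in $\E[B]$ so that $\E[B^2]$ isolates as a single linear equation. Everything else is straightforward algebra using the deterministic, linear batch-processing-time assumption.
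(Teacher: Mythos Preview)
Your proposal is correct and follows essentially the same route as the paper: pass (\ref{eq:B-by-A}) to stationarity to get $\E[B]=\E[A]+\Pr(A=0)$ and $\E[B^2]=\E[A^2]+\Pr(A=0)$, compute $\E[A]=\lambda\E[H]$ and $\E[A^2]=\lambda\E[H]+\lambda^2\E[H^2]$ via the Poisson conditioning, substitute (\ref{eq:EH-special})--(\ref{eq:EH2-special}), and solve the resulting linear equations. Your explicit observation that the cross term $2A\,\one_{\{A=0\}}$ vanishes and that invoking (\ref{eq:EB-special}) collapses the first-order terms in the second equation is exactly the ``rearranging terms'' the paper performs without spelling out.
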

\begin{proof}

From (\ref{eq:Ab-moments}) and (\ref{eq:H-dist}), we have 
\begin{align}
\E[A] &= \sum_{b=1}^{\infty} \Pr(B=b) \lambda \E[H^{[b]}] 
= \lambda \E[H],
\label{eq:EA-special}
\\
\E[A^2] 
&= 
\sum_{b=1}^{\infty} \Pr(B=b) \left(\lambda\E[H^{[b]}] + \lambda^2 \E[(H^{[b]})^2]\right)
\nonumber
\\
&=
\lambda\E[H] + \lambda^2 \E[H^2].
\label{eq:EA2-special}
\end{align}
It then follows from (\ref{eq:B-by-A}), (\ref{eq:EH-special}), and (\ref{eq:EA-special}) that
\begin{align*}
\E[B] &= \E[A]+\Pr(A=0)
\\
&=
\lambda(\alpha \E[B] + \tau_0)+\Pr(A=0),
\end{align*}
so that we obtain (\ref{eq:EB-special}).
Similarly, it follows from (\ref{eq:B-by-A}), (\ref{eq:EH2-special}), and
(\ref{eq:EA2-special}) that
\begin{align*}
\E[B^2] 
&= 
\E[A^2] + \Pr(A=0)
\\
&=
\lambda (\alpha \E[B]+\tau_0)
+ \lambda^2 (\alpha^2 \E[B^2] + 2 \alpha \tau_0 \E[B] + \tau_0^2)
+ \Pr(A=0).
\end{align*}
We then obtain (\ref{eq:EB2-special}) by rearranging terms of this
equation.
\end{proof}

\begin{lemma}
Under Assumption \ref{assumption:special},
the mean latency $\E[W]$ is given in terms of the probability 
$\pi_0 := \Pr(L=0)$ that the server is idle by
\begin{align}
\E[W] 
&= 
\alpha +\tau_0 
+ 
\frac{\ds \lambda(1+2\lambda\alpha)\left(2\alpha\tau_0+\alpha^2  
+ \frac{(1-\pi_0-\lambda\alpha)\tau_0}{\lambda}
\right)}
{2(1-\lambda^2 \alpha^2)}.
\label{eq:EW-special}
\end{align}
\end{lemma}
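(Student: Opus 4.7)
The plan is to start from the latency formula of Lemma~\ref{lemma:EW}, express the right-hand side purely in terms of $\E[B]$, and then eliminate $\E[B]$ in favor of the idle probability $\pi_0$ using a work-conservation identity.

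First I would substitute the formula for $\E[\hat H]$ from (\ref{eq:hatH-special}) into (\ref{eq:EW}) and rearrange to isolate the ratio $\E[B^2]/\E[B]$. Concretely, this gives
\[
\E[W] \;=\; \frac{\E[B^2]}{\E[B]} \cdot \frac{1+2\lambda\alpha}{2\lambda} \;+\; \tau_0 \;-\; \frac{1}{2\lambda}.
\]
Next I would use (\ref{eq:EB2-special}) to write
\[
\frac{\E[B^2]}{\E[B]} \;=\; \frac{1 + 2\lambda^2\alpha\tau_0 + \lambda^2\tau_0^2/\E[B]}{1-\lambda^2\alpha^2},
\]
which reduces the problem to eliminating the single scalar $\lambda^2\tau_0^2/\E[B]$.

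The key conceptual step, and the only one not purely mechanical, is relating $\E[B]$ to $\pi_0$. For this I would invoke work conservation: the server is idle precisely when the system is empty, so its utilization equals $1-\pi_0$. On the other hand, by a rate argument, the batch-arrival rate is $\lambda/\E[B]$, and under Assumption~\ref{assumption:special} the mean batch service time is $\E[H]=\alpha\E[B]+\tau_0$ by (\ref{eq:EH-special}); hence the utilization also equals $\lambda\E[H]/\E[B] = \lambda\alpha + \lambda\tau_0/\E[B]$. Equating the two yields the identity
\[
\frac{\lambda\tau_0}{\E[B]} \;=\; 1 - \pi_0 - \lambda\alpha,
\]
so that $\lambda^2\tau_0^2/\E[B] = \lambda\tau_0(1-\pi_0-\lambda\alpha)$. (Equivalently, one can plug (\ref{eq:EB-special}) into this identity to recover $\Pr(A=0) = \lambda\tau_0 \pi_0/(1-\pi_0-\lambda\alpha)$, which is consistent but not needed for the statement.)

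Substituting this relation into the expression for $\E[B^2]/\E[B]$ above and then into the formula for $\E[W]$ reduces the proof to routine algebra: combine the resulting fractions over the common denominator $2\lambda(1-\lambda^2\alpha^2)$, use $1-\lambda^2\alpha^2 = (1-\lambda\alpha)(1+\lambda\alpha)$ to cancel a factor against the $\tau_0-1/(2\lambda)$ remainder, and regroup the surviving terms in the numerator into $\lambda(1+2\lambda\alpha)\bigl(\alpha^2 + 2\alpha\tau_0 + (1-\pi_0-\lambda\alpha)\tau_0/\lambda\bigr)$, leaving $\alpha+\tau_0$ as an additive constant. The main obstacle is the work-conservation identity; once established, the rest is bookkeeping of $O(1)$ terms, and I expect no further subtlety.
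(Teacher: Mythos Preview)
Your proposal is correct and follows essentially the same route as the paper: substitute (\ref{eq:hatH-special}) into Lemma~\ref{lemma:EW}), reduce via (\ref{eq:EB2-special}) to an expression involving only $\E[B]$, and eliminate $\E[B]$ using the work-conservation identity $1-\pi_0=\lambda\alpha+\lambda\tau_0/\E[B]$ (which the paper derives as (\ref{eq:pi0-Little}) via Little's law applied to the server). The only cosmetic difference is that the paper first rewrites $\E[W]$ as $\alpha+\tau_0+\tfrac{(1+2\lambda\alpha)(\E[B^2]-\E[B])}{2\lambda\E[B]}$ before substituting, so the constant $\alpha+\tau_0$ is isolated from the outset rather than extracted at the end.
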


\begin{proof}
It follows from (\ref{eq:EW}), (\ref{eq:hatH-special}),
(\ref{eq:EB-special}), and (\ref{eq:EB2-special}) that
\begin{align}
\E[W] 
&= 
\tau_0 + \frac{(1+2\lambda\alpha)\E[B^2] - \E[B]}{2\lambda\E[B]}
\nonumber
\\
&=
\alpha + \tau_0 + \frac{(1+2\lambda\alpha)(\E[B^2] -
\E[B])}{2\lambda\E[B]}.
\label{eq:EW-by-EB-EB2}
\end{align}
Note here that (\ref{eq:EB-special}) and (\ref{eq:EB2-special}) imply
\begin{align}
\frac{\E[B^2] - \E[B]}{\lambda\E[B]} 
&=
\frac{(2\lambda^2\alpha\tau_0+\lambda^2\alpha^2)\E[B]  + \lambda^2 \tau_0^2}
{(1-\lambda^2 \alpha^2)\lambda\E[B]}
\nonumber
\\
&=
\frac{\ds \lambda\left(2\alpha\tau_0+\alpha^2  + \frac{\tau_0^2}{\E[B]}\right)}
{1-\lambda^2 \alpha^2}.
\label{eq:EB2-EB}
\end{align}
In addition, owing to Little's law, the server utilization (i.e., the
mean number of batches being served in steady state) is equal to the
product of the number of batches processed per unit time and the mean
batch processing time:
\begin{equation}
1-\pi_0 = \frac{\lambda}{\E[B]} \cdot \E[H]
=
\lambda\alpha+\frac{\lambda\tau_0}{\E[B]},
\label{eq:pi0-Little}
\end{equation}
where we used (\ref{eq:EH-special}) for the second equality.
Therefore, we obtain (\ref{eq:EW-special}) from 
(\ref{eq:EW-by-EB-EB2}),  (\ref{eq:EB2-EB}), and (\ref{eq:pi0-Little}).
\end{proof}

\begin{remark}
By definition, we have $\pi_0 = \pi(0)$ (see (\ref{eq:pi(z)-def})).
\end{remark}

Even under Assumption \ref{assumption:special}, it seems difficult to
determine the exact value of $\pi_0$.
However, we have the following simple lower bound for this quantity:

\begin{lemma}
\label{lemma:pi0-bound}

Under Assumption \ref{assumption:special},
$\pi_0$ is bounded below by
\begin{equation}
\pi_0 \geq \max(0, 1-\lambda(\alpha+\tau_0)).
\label{eq:pi0-bound}
\end{equation}
\end{lemma}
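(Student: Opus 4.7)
The plan is to derive the stated bound directly from the Little's-law identity already established in equation (\ref{eq:pi0-Little}), namely
\[
1-\pi_0 = \lambda\alpha + \frac{\lambda\tau_0}{\E[B]}.
\]
This relation expresses $1-\pi_0$ as a monotonically decreasing function of $\E[B]$ (since $\tau_0 \geq 0$ and $\lambda > 0$). Thus, to obtain an upper bound on $1-\pi_0$, and hence a lower bound on $\pi_0$, it suffices to produce a lower bound on $\E[B]$.

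The key observation is that the dynamic batching rule (\ref{eq:B-by-LD}) forces every processed batch to contain at least one inference job: if the buffer is empty at the start of a busy period, the first arriving job triggers a batch of size $1$, while if there are waiting jobs, they are all absorbed into the next batch. Consequently $B \geq 1$ almost surely, so $\E[B] \geq 1$. Substituting this into the Little's-law identity yields $1-\pi_0 \leq \lambda\alpha + \lambda\tau_0 = \lambda(\alpha+\tau_0)$, i.e., $\pi_0 \geq 1-\lambda(\alpha+\tau_0)$. Combining with the trivial bound $\pi_0 \geq 0$ (as $\pi_0$ is a probability) gives (\ref{eq:pi0-bound}).

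There is no real obstacle here: the proof is essentially a one-line consequence of (\ref{eq:pi0-Little}) together with $\E[B] \geq 1$. The only point worth emphasizing is that $B \geq 1$ is a structural property built into the batching scheme of Section \ref{sec:model} rather than a byproduct of Assumption \ref{assumption:special}; the linearity of $\tau^{[b]}$ enters only through having derived (\ref{eq:pi0-Little}) in the previous lemma. The bound is tight precisely when $\E[B] = 1$, which is the heuristic regime where the server rarely accumulates more than one waiting job per batch, so one would expect the bound to be most informative at low load.
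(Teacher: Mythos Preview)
Your proof is correct and essentially identical to the paper's own argument: both use $B \geq 1$ almost surely to get $\E[B] \geq 1$, plug this into (\ref{eq:pi0-Little}) to obtain $\pi_0 \geq 1-\lambda(\alpha+\tau_0)$, and combine with the trivial $\pi_0 \geq 0$. Your additional remarks on monotonicity and tightness are accurate but not needed for the proof itself.
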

\begin{proof}
Because $B \geq 1$ holds with probability one, $\E[B] \geq 1$ holds.
We then have from (\ref{eq:pi0-Little}),
\[
\pi_0 \geq 1-\lambda (\alpha+\tau_0),
\]
which and $\pi_0 \geq 0$ imply (\ref{eq:pi0-bound}).
\end{proof}
\begin{remark}
If $\lambda(\alpha+\tau_0) < 1$, the quantity 
$1-\lambda(\alpha+\tau_0)$ is equal to the probability that the server
is idle in a stationary single-service M/D/1 queue with the arrival rate
$\lambda$ and the processing time $H^{[1]} = \alpha+\tau_0$, where 
arriving inference jobs are processed one by one.
\end{remark}

\begin{remark}
It follows from (\ref{eq:pi0-Little}) that
$\E[B] \geq \max(1,\lambda\tau_0/(1-\lambda\alpha))$,
so that if Assumption \ref{assumption:c^b} is satisfied, we have
(cf.\ (\ref{eq:eta}))
\begin{equation}
\eta 
\geq 
\frac{1}{\beta + c_0/\max(1,\lambda\tau_0/(1-\lambda\alpha))}.
\label{eq:eta-bound}
\end{equation}
\end{remark}

We are in a position to obtain the main result of this paper:
\begin{theorem}
\label{theorem:EW-special}

Under Assumption \ref{assumption:special},
the mean latency $\E[W]$ is bounded above by
\begin{align}
\E[W] 
&\leq
\frac{\alpha+\tau_0}{2(1-\lambda \alpha)}
\left(
1 + 2\lambda\tau_0 + \frac{1 - \lambda\tau_0}{1+\lambda\alpha}
\right)
=:
\phi_0(\lambda,\alpha,\tau_0),
\label{eq:EW-bound-0}
\\
\mbox{and}\hspace{1em}
\nonumber
\\
\E[W] 
&\leq
\frac{3}{2}
\cdot
\frac{\ds \tau_0}
{1-\lambda \alpha}
+
\frac{\alpha}{2}
\cdot
\frac{\ds \lambda\alpha +2}
{1-\lambda^2 \alpha^2}
=:
\phi_1(\lambda,\alpha,\tau_0).
\label{eq:EW-bound-1}
\end{align}
In addition, we have $\phi_0(\lambda,\alpha,\tau_0) \leq \phi_1(\lambda,\alpha,\tau_1)$
if and only if $\lambda \leq 1/(\alpha+\tau_0)$.
\end{theorem}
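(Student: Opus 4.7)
The key observation driving the proof is that the right-hand side of (\ref{eq:EW-special}) is affine and \emph{strictly decreasing} in $\pi_0$: the variable $\pi_0$ appears only through the factor $(1-\pi_0-\lambda\alpha)\tau_0/\lambda$, which multiplies the positive quantity $\lambda(1+2\lambda\alpha)/(2(1-\lambda^2\alpha^2))$. Consequently, any lower bound on $\pi_0$ translates into an upper bound on $\E[W]$, and my plan is to feed each of the two bounds from Lemma \ref{lemma:pi0-bound} into (\ref{eq:EW-special}) and simplify.

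First I would substitute the trivial bound $\pi_0 \geq 0$; this replaces the inner bracket of (\ref{eq:EW-special}) by $2\alpha\tau_0+\alpha^2+(1-\lambda\alpha)\tau_0/\lambda$. Multiplying through by $\lambda$, using $1-\lambda^2\alpha^2=(1-\lambda\alpha)(1+\lambda\alpha)$, absorbing the prefactor $\alpha+\tau_0$, and grouping the $\tau_0$- and $\alpha$-terms separately, I expect a short calculation to yield exactly $\phi_1$. Next, under the condition $\lambda\leq 1/(\alpha+\tau_0)$ I would substitute the tighter bound $\pi_0\geq 1-\lambda(\alpha+\tau_0)$. This collapses the inner bracket into $2\alpha\tau_0+\alpha^2+\tau_0^2=(\alpha+\tau_0)^2$, so a factor $\alpha+\tau_0$ can be pulled out and, after combining the resulting single fraction with the additive term $\alpha+\tau_0$ over the common denominator $2(1-\lambda\alpha)(1+\lambda\alpha)$, the expression should match $\phi_0$. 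In the complementary regime $\lambda>1/(\alpha+\tau_0)$, the second bound is vacuous, but the $\phi_0\leq\phi_1$ comparison established next shows that $\phi_0\geq\phi_1\geq\E[W]$ there, so the upper bound $\E[W]\leq\phi_0$ holds automatically.

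For the final comparison I would simply compute $\phi_1-\phi_0$. Writing both quantities over the common denominator $2(1-\lambda^2\alpha^2)$, the linear-in-$\tau_0$ terms match, and the remaining piece should telescope to $\tau_0(1+2\lambda\alpha)\bigl(1-\lambda(\alpha+\tau_0)\bigr)/\bigl(2(1-\lambda^2\alpha^2)\bigr)$; since $(1+2\lambda\alpha)$ and $1-\lambda^2\alpha^2$ are both positive under the stability condition $\lambda\alpha<1$, the sign of $\phi_1-\phi_0$ coincides with that of $1-\lambda(\alpha+\tau_0)$, giving the stated equivalence. The whole argument is routine rational-function algebra; the only real obstacle is bookkeeping, namely factoring the numerators consistently so that the simplified bounds line up with the specific forms written in (\ref{eq:EW-bound-0}) and (\ref{eq:EW-bound-1}), and verifying the sign of the telescoped remainder. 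No additional probabilistic input beyond the monotonicity of (\ref{eq:EW-special}) in $\pi_0$ and Lemma \ref{lemma:pi0-bound} is required.
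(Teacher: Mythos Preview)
Your proposal is correct and follows the same approach as the paper: substitute each lower bound on $\pi_0$ from Lemma~\ref{lemma:pi0-bound} into (\ref{eq:EW-special}) and simplify. The paper streamlines two of your steps: it substitutes $\pi_0 \geq 1-\lambda(\alpha+\tau_0)$ unconditionally (this inequality holds for every $\lambda$, since it comes from $\E[B]\geq 1$ regardless of the sign of the right-hand side), and it reads off the equivalence $\phi_0\leq\phi_1\Leftrightarrow\lambda\leq 1/(\alpha+\tau_0)$ directly from the monotonicity in $\pi_0$ that you already noted, rather than computing $\phi_1-\phi_0$ by hand.
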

\begin{proof}
As stated in Lemma \ref{lemma:pi0-bound}, we have two lower bounds 
for $\pi_0$. Using $\pi_0 \geq 1-\lambda(\alpha+\tau_0)$ and
(\ref{eq:EW-special}), we have
\begin{align*}
\E[W] 
&\leq
\alpha+ \tau_0 + \frac{\ds \lambda(1+2\lambda\alpha)(\alpha+\tau_0)^2}
{2(1-\lambda^2 \alpha^2)}
\\
&=
\frac{\alpha+\tau_0}{2(1-\lambda \alpha)}
\cdot
\frac{2 + \lambda\alpha + \lambda\tau_0 + 2\lambda^2\alpha\tau_0}
{1+\lambda\alpha},
\end{align*}
which implies (\ref{eq:EW-bound-0}). On the other hand,
we have (\ref{eq:EW-bound-1}) from $\pi_0 \geq 0$ and (\ref{eq:EW-special}) 
as follows:
\begin{align*}
\E[W] 
&\leq
\alpha +\tau_0
+ 
\frac{\ds (1+2\lambda\alpha)
\left(\lambda\alpha\tau_0 +\lambda\alpha^2 + \tau_0\right)}
{2(1-\lambda^2 \alpha^2)}
\\
&=
\alpha + \tau_0
+ 
\frac{\ds (1+2\lambda\alpha)\tau_0}
{2(1-\lambda \alpha)}
+
\frac{\ds (1+2\lambda\alpha)\lambda\alpha^2}
{2(1-\lambda^2 \alpha^2)}
\\
&=
\frac{\ds 3\tau_0}
{2(1-\lambda \alpha)}
+
\frac{\ds \lambda\alpha^2 +2\alpha}
{2(1-\lambda^2 \alpha^2)}.
\end{align*}
The relation $\phi_0(\lambda,\alpha,\tau_0) \leq \phi_1(\lambda,\alpha,\tau_1)
\Leftrightarrow \lambda \leq 1/(\alpha+\tau_0)$ is thus obvious from these
derivations.
\end{proof}

Theorem \ref{theorem:EW-special} provides a surprisingly simple upper
bound for the mean latency $\E[W]$. For convenience, let
\begin{equation}
\phi(\lambda,\alpha,\tau_0) 
:= 
\min(\phi_0(\lambda,\alpha,\tau_0), \phi_1(\lambda,\alpha,\tau_0)).
\label{eq:phi}
\end{equation}
Even though this upper bound is obtained by replacing the idle
probability $\pi_0$ with its almost trivial lower bound in (\ref{eq:pi0-bound}),
it provides a quite good approximation to the exact value of the mean
latency $\E[W]$, as we will see in the next section.

\section{Numerical Evaluation}
\label{sec:experiments}

In this section, we numerically validate the applicability
of the derived mathematical formulas to GPU-based inference servers.
Throughout this section, we concentrate on the model considered in
Section \ref{ssec:special_case} where processing times are 
deterministic and linearly increase with the batch size.
We first conduct simulation experiments to examine how well the exact
mean latency $\E[W]$ is approximated by the closed-form upper bound
$\phi(\lambda,\alpha,\tau_0)$, assuming both infinite and
finite maximum batch sizes. For the simulation experiments, we 
employ the model parameters $\alpha$ and $\tau_0$
estimated in Section \ref{ssec:special_case} from Table
\ref{table:resnet}; see the paragraph just after Assumption
\ref{assumption:special} for more detail.
We then show the usefulness of the closed-form expression
of $\phi(\lambda,\alpha,\tau_0)$ by comparing it to the measured
latency in real-world implementation of GPU-based inference servers.
We use two types of GPUs for this experiment, NVIDIA Tesla V100
and Tesla T4, which are available on the Amazon Elastic Compute Cloud
(Amazon EC2). Also, the MLPerf inference benchmark
\cite{mlperf} is used for load generation and time measurement.
The GPU inference servers are set up using NVIDIA's implementation of
the MLPerf inference benchmark v0.5, which is publicly available on GitHub
\cite{github}.

Fig.\ \ref{fig:infinite_EW} shows simulation results for
the mean latency $\E[W]$ and its upper bounds
$\phi_0(\lambda,\alpha,\tau_0)$ and $\phi_1(\lambda,\alpha,\tau_0)$
given in (\ref{eq:EW-bound-0}) and (\ref{eq:EW-bound-1})
(recall that the normalized load $\rho$ is defined in
(\ref{eq:rho})). We observe that the combination (\ref{eq:phi}) of
these upper bounds  quite well approximates the exact curve of $\E[W]$.
In particular, except for small values of $\rho$, 
$\E[W]$ takes fairly close values to $\phi_1(\lambda,\alpha,\tau_0)$.
The reasons for this remarkable accuracy of approximation can be
explained as follows.
Recall that the upper bound $\phi_1(\lambda,\alpha,\tau_0)$ is
obtained by replacing the idle probability $\pi_0$ with its trivial
lower bound $0$. In Fig.\ \ref{fig:infinite_pi0}, the server
utilization $1-\pi_0$ is plotted as a function of the normalized load
$\rho$. As a reference, we also plot its upper bound 
$\min(1,\lambda(\alpha+\tau_0))$ (cf.\ (\ref{eq:pi0-bound})).
From this figure, we see that \textit{the server utilization takes
a value close to $1$ even for a moderate value of $\rho$},
which is quite different from ordinary single-server queues, 
where the server utilization is equal to the traffic intensity.
This phenomenon comes from the fact that the server's
processing speed largely increases as the batch size increases,
so that \textit{the system is overloaded for small batch sizes} 
even under a moderate load level $\rho$. Because of this behavior of the
server utilization, the upper bound $\phi_1(\lambda,\alpha,\tau_0)$ is
a good approximation to the mean latency $\E[W]$ for a wide 
range of $\rho$. 
On the other hand, for small $\rho$, the upper bound 
$\phi_0(\lambda,\alpha,\tau_0)$ is a good approximation to
$\E[W]$. Note that $\phi_0(\lambda,\alpha,\tau_0)$ is obtained by
replacing the mean batch size $\E[B]$ with its trivial lower bound
$1$. Therefore, $\E[W] \simeq \phi_0(\lambda,\alpha,\tau_0)$ implies
that the mean batch size $\E[B] \simeq 1$, i.e., the server does not
sufficiently leverage its batch-processing capability in such a region.

\begin{figure}[tp]
\centering
\begin{subfigure}[t]{0.48\textwidth}
\centering
\includegraphics[scale=1.0]{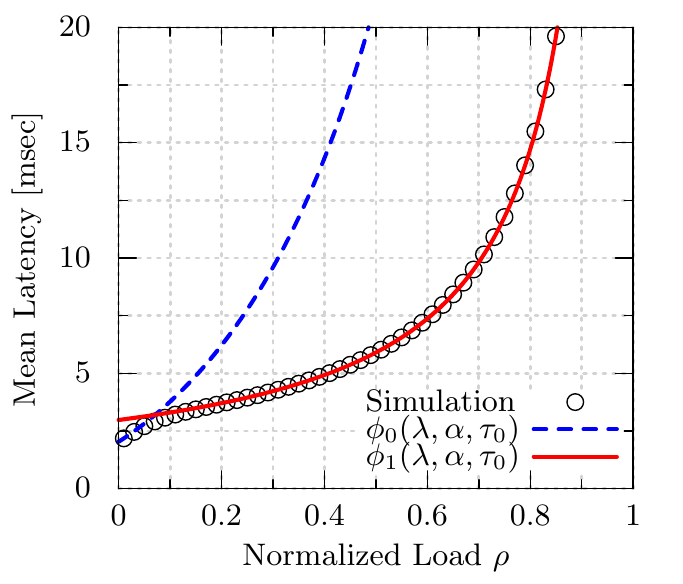}
\caption{Tesla V100, Mixed precision (Table \ref{table:resnet} (a)).}
\end{subfigure}
\begin{subfigure}[t]{0.48\textwidth}
\centering
\includegraphics[scale=1.0]{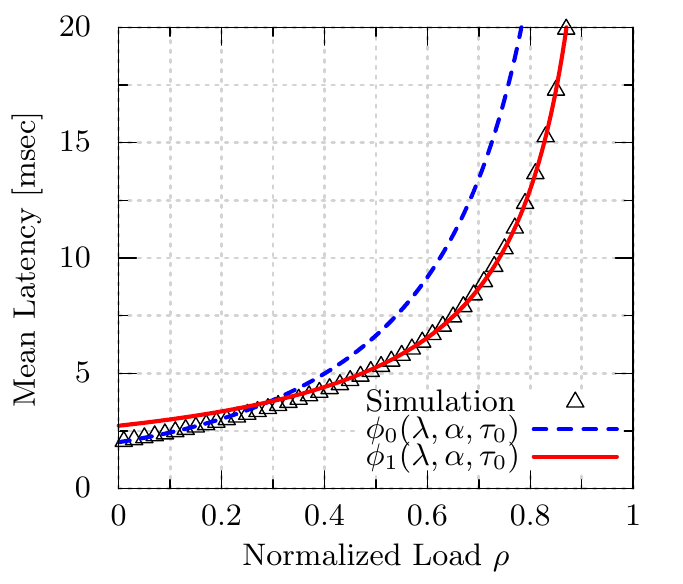}
\caption{Tesla P4, INT8 (Table \ref{table:resnet} (b)).}
\end{subfigure}
\caption{The mean latency $\E[W]$ and its upper bound (in milliseconds).}
\label{fig:infinite_EW}
\end{figure}

\begin{figure}[tp]
\centering
\begin{subfigure}[t]{0.48\textwidth}
\centering
\includegraphics[scale=1.0]{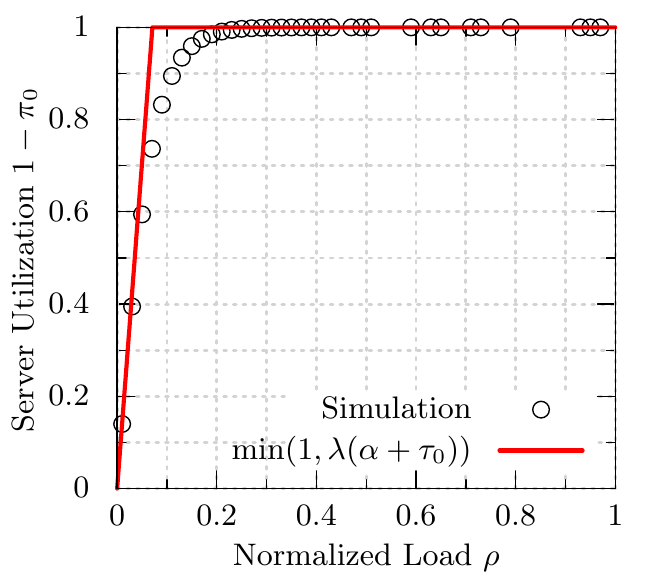}
\caption{Tesla V100, Mixed precision (Table \ref{table:resnet} (a)).}
\end{subfigure}
\begin{subfigure}[t]{0.48\textwidth}
\centering
\includegraphics[scale=1.0]{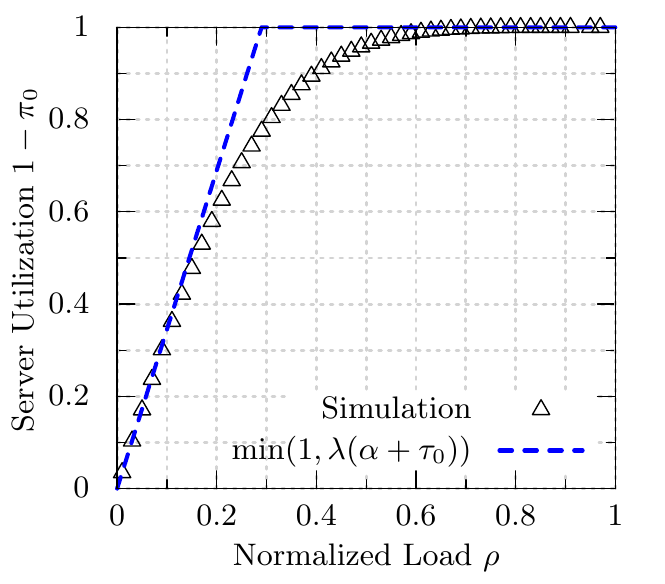}
\caption{Tesla P4, INT8 (Table \ref{table:resnet} (b)).}
\end{subfigure}
\caption{The server utilization $1-\pi_0$ and its upper bound.}
\label{fig:infinite_pi0}
\mbox{}
\\
%
\centering
\includegraphics[scale=1.0]{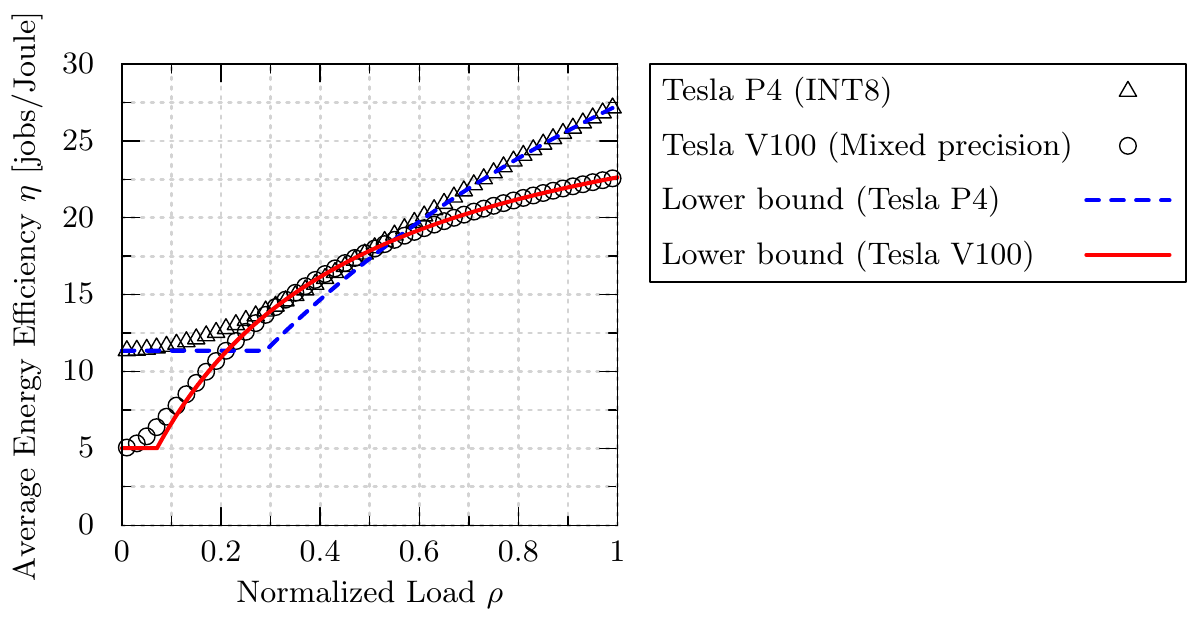}
\caption{The average energy efficiency $\eta$ and its lower bound.}
\label{fig:eta}
\mbox{}\\
\vspace{1ex}
\includegraphics[scale=1.0]{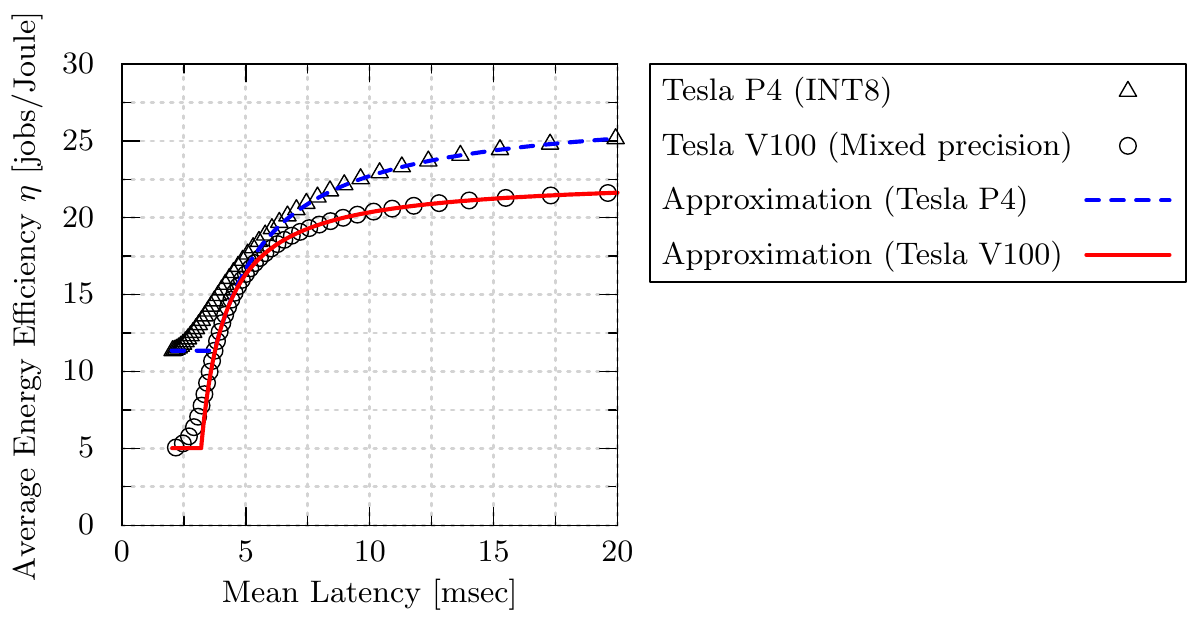}
\caption{The parametric curve of the average energy efficiency $\eta$
and the mean latency $\E[W]$ with parameter $\rho$. The
approximation curves are plotted using (\ref{eq:eta-bound}) and
(\ref{eq:phi}).
}
\label{fig:infinite_tradeoff}
\end{figure}

We next discuss the energy efficiency using the linear model
(\ref{eq:c^b}) considered in Section \ref{ssec:energy}.
Recall that the average energy efficiency $\eta$ is defined as
(\ref{eq:eta-def}), which represents the mean number of jobs processed
with unit energy. In Fig.\ \ref{fig:eta}, simulation results for
$\eta$ and its lower bound (\ref{eq:eta-bound}) are plotted as functions of
the normalized load $\rho$. From this figure, we observe that the
energy efficiency can be largely enhanced by letting the server
adequately loaded. Also, the energy-efficiency $\eta$ is
well-approximated by the lower bound (\ref{eq:eta-bound}) except for
small values of $\rho$. Fig.\ \ref{fig:infinite_tradeoff} shows the
energy-latency tradeoff, where the relation between $\eta$ and the
mean latency $\E[W]$ is plotted with parameter $\rho$.
In this figure, we also plot approximation curves obtained by
combining (\ref{eq:eta-bound}) and (\ref{eq:phi}).
We see that the closed-form bounds (\ref{eq:eta-bound}) and
(\ref{eq:phi}) are useful to determine an adequate operating point
of the server, taking the energy-latency tradeoff into consideration.

We then discuss the relation between the model considered in
this paper and a corresponding batch-service queue with 
\textit{finite maximum batch size} $b_{\max}$. As mentioned in Section
\ref{sec:intro}, the mean latency in the case of finite $b_{\max}$ 
can be numerically obtained with results in \cite[Section
4.2]{Neuts89}. Fig.\ \ref{fig:finite} shows that if $b_{\max}$ is
sufficiently large, the mean latency is well approximated by our
closed-form upper bound $\phi(\lambda,\alpha,\tau_0)$ given by
(\ref{eq:phi}). If $b_{\max}$ is small, on the other hand, the mean
latency deviates from $\phi(\lambda,\alpha,\tau_0)$ for the arrival
rate $\lambda$ near the stability boundary 
$\lambda = \mu^{[b_{\max}]} = b_{\max}/(\alpha b_{\max}+\tau_0)$.
However, we observe from this figure that even for small values of
$b_{\max}$, the mean latency is still well-approximated by
(\ref{eq:phi}) if the system is moderately loaded, i.e., $\lambda$ is
sufficiently small compared to $\mu^{[b_{\max}]}$.

\begin{figure}[tp]
\centering
\begin{subfigure}{0.98\textwidth}
\centering
\includegraphics[scale=1.05]{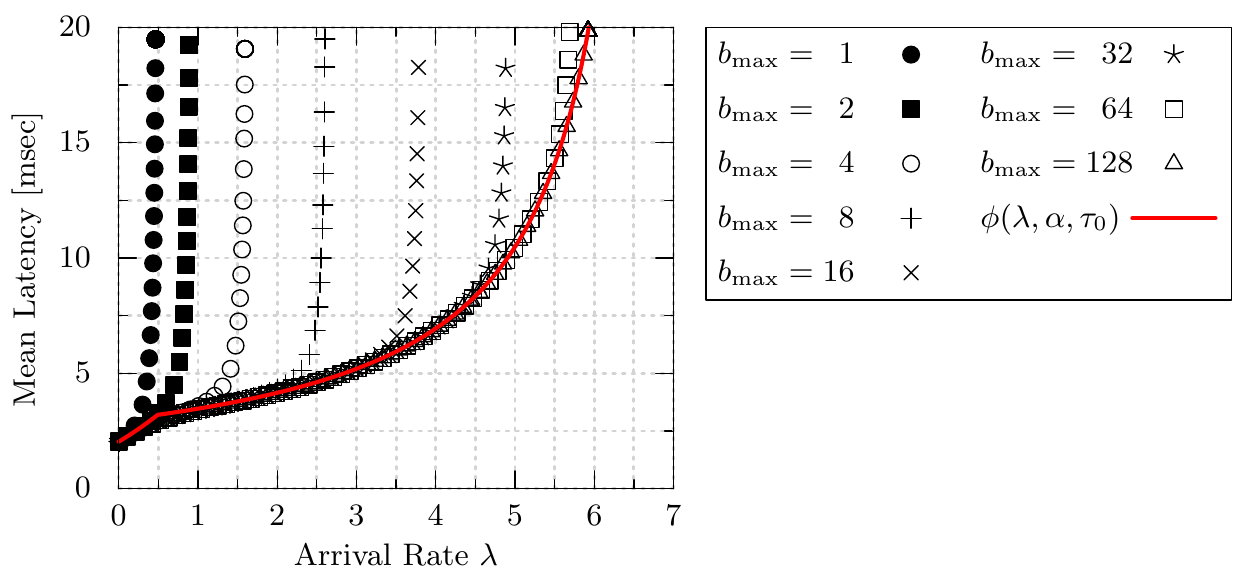}
\caption{Tesla V100, Mixed precision (Table \ref{table:resnet} (a)).}
\end{subfigure}
\mbox{}\vspace{1ex}\\
\begin{subfigure}{0.98\textwidth}
\centering
\includegraphics[scale=1.05]{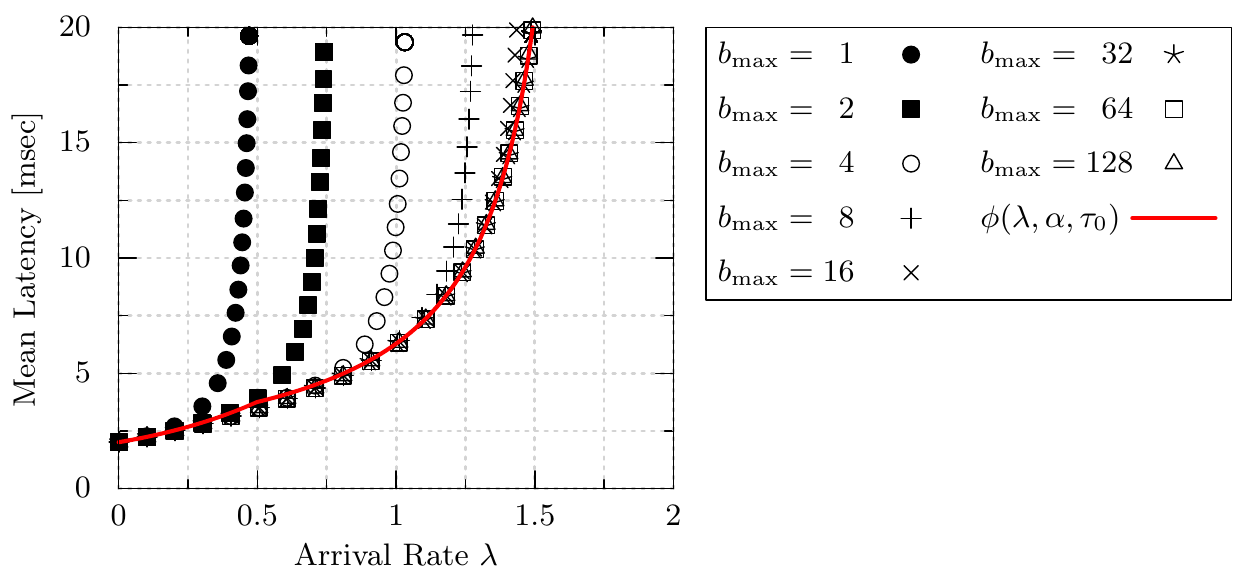}
\caption{Tesla P4, INT8 (Table \ref{table:resnet} (b)).}
\end{subfigure}
\caption{The mean latency in the case of finite maximum batch sizes
$b_{\max}$, plotted with the closed-form upper bound
$\phi(\lambda,\alpha,\tau_0)$ for the case of $b_{\max}=\infty$.}
\label{fig:finite}
\end{figure}

Finally, we compare the derived closed-form expression
$\phi(\lambda,\alpha,\tau_0)$ to the mean latency measured in a real
GPU-based inference server implemented with MLPerf inference
benchmark. Three types of networks are used in this experiment:
MobileNet, ResNet50, and SSD-MobileNet. 
We use the two GPUs, Tesla V100 and Tesla T4 as mentioned
above. We set the computing precision in Tesla V100 to FP16 and that
in Tesla T4 to INT8. We also set the maximum batch size $b_{\max}=64$
in all experiments. First, the batch processing time for each batch
size $b$ is measured using MultiStream Scenario of MLPerf inference
benchmark. For each $b$, we collect $100$ samples of batch processing
times and we use their median as the representative value. Fig.\
\ref{fig:mlperf-H} shows the batch processing time as a function of
the batch size for the six cases in total, where $\alpha$ and
$\tau_0$ are fitted by the least squares method. We observe that batch
processing times linearly increase with the batch size as we 
assumed in the mathematical analysis. In ResNet50, however, 
we observe additional stair-like increases in the processing time at
several places. Fig.\ \ref{fig:mlperf-mu} shows the corresponding
throughput curves plotted as functions of the batch size. 
We observe that the stair-like increases in processing times
result in discontinuous decreases in the throughput.

\begin{figure}[tp]
\centering
\begin{subfigure}{0.98\textwidth}
\centering
\includegraphics[scale=1.0]{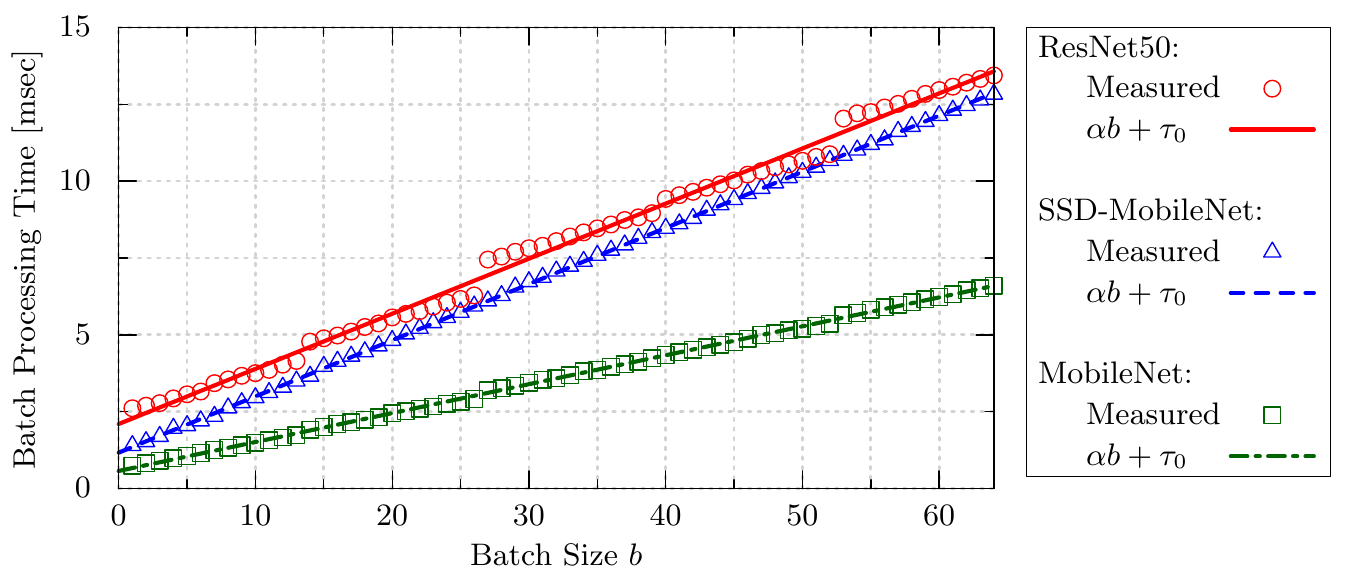}
\caption{Tesla V100, FP16.}
\end{subfigure}
\mbox{}\vspace{1ex}\\
\begin{subfigure}{0.98\textwidth}
\centering
\includegraphics[scale=1.0]{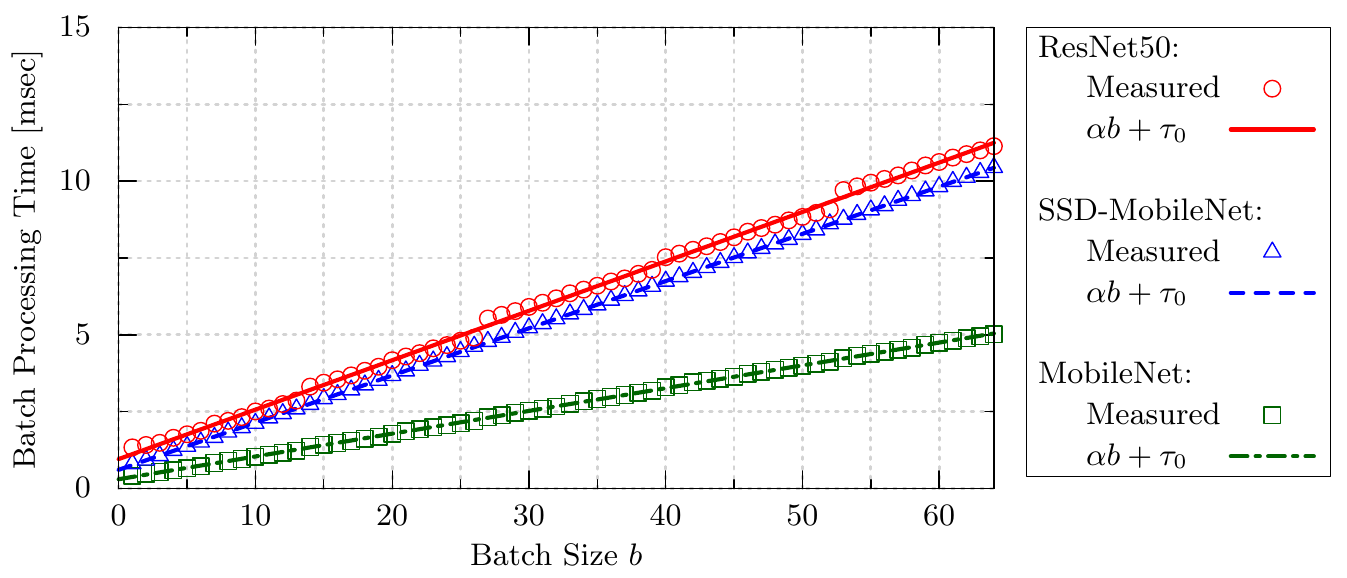}
\caption{Tesla T4, INT8.}
\end{subfigure}
\caption{The median of batch processing times measured using MLPerf
MultiStream Scenario.}
\label{fig:mlperf-H}
\end{figure}

\begin{figure}[tp]
\centering
\begin{subfigure}{0.98\textwidth}
\centering
\includegraphics[scale=1.0]{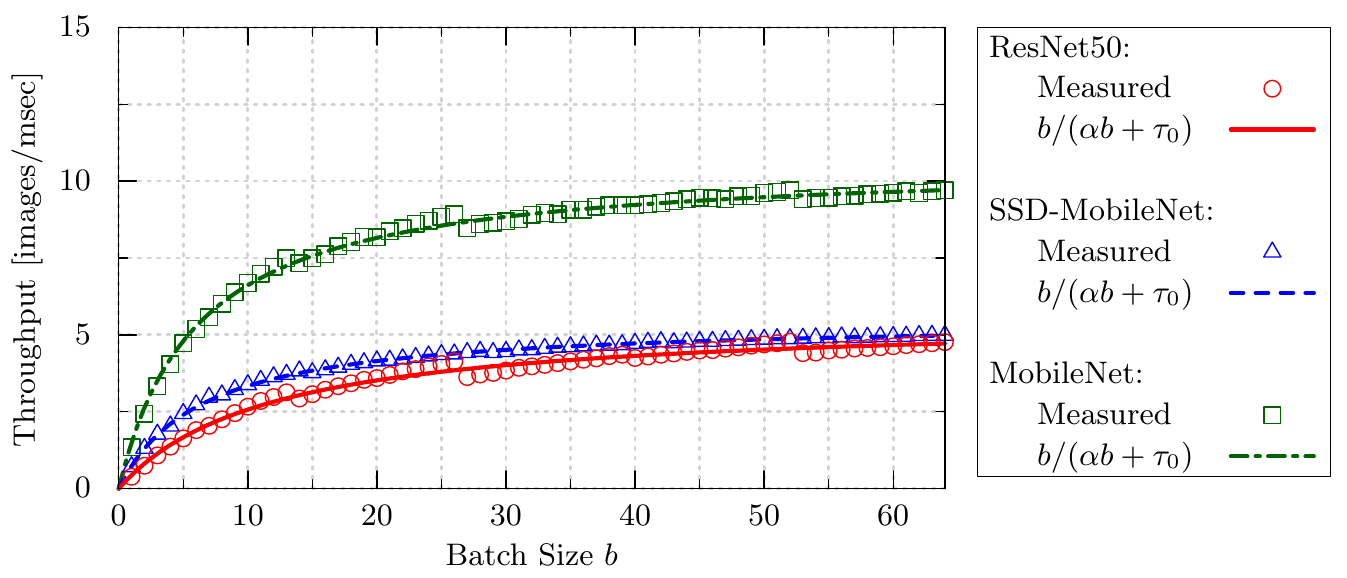}
\caption{Tesla V100, FP16.}
\end{subfigure}
\mbox{}\vspace{1ex}\\
\begin{subfigure}{0.98\textwidth}
\centering
\includegraphics[scale=1.0]{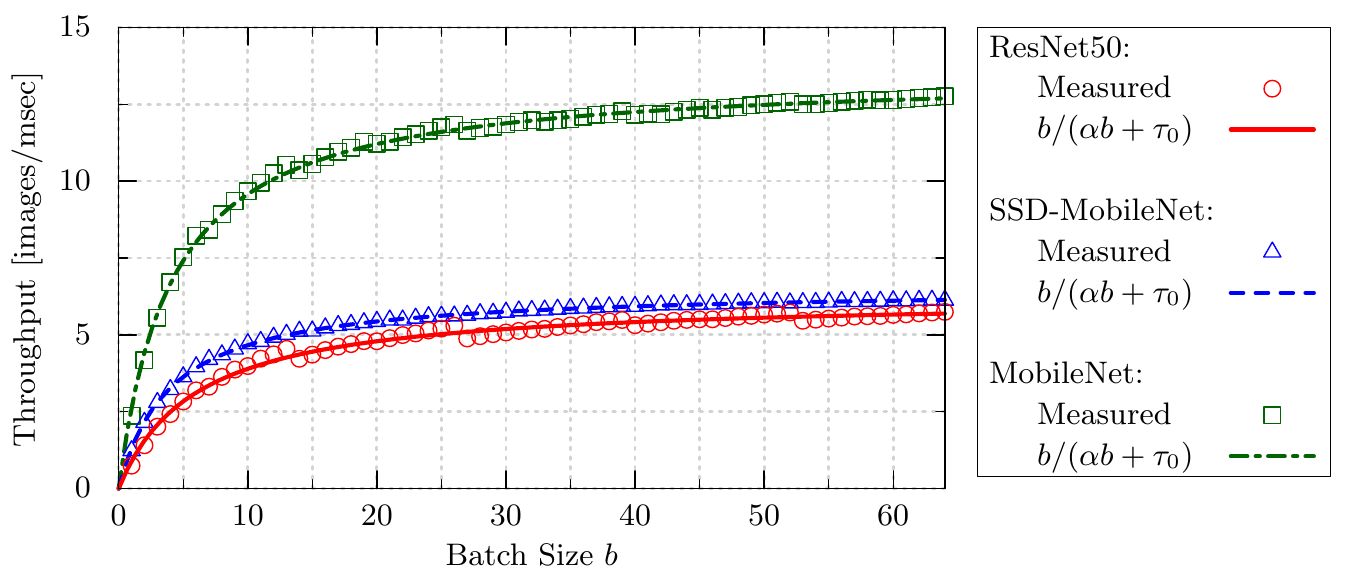}
\caption{Tesla T4, INT8.}
\end{subfigure}
\caption{The median of batch processing times measured using MLPerf
MultiStream Scenario.}
\label{fig:mlperf-mu}
\end{figure}

\begin{figure}[tp]
\centering
\begin{subfigure}{0.98\textwidth}
\centering
\includegraphics[scale=1.0]{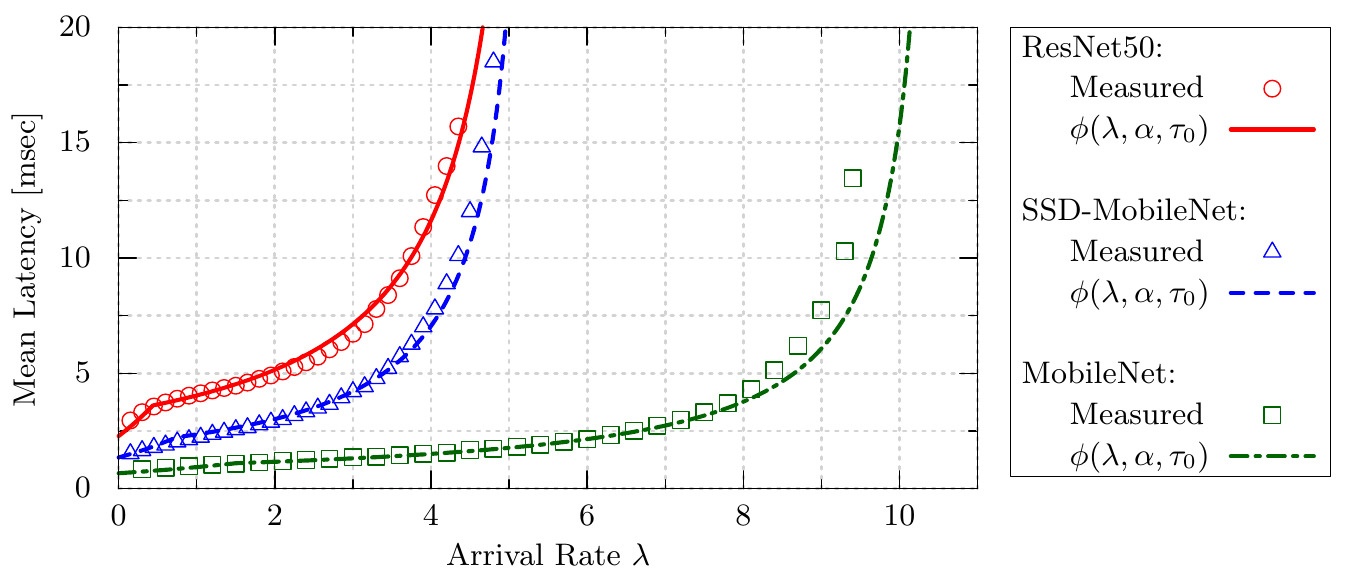}
\caption{Tesla V100, FP16.}
\end{subfigure}
\mbox{}\vspace{1ex}\\
\begin{subfigure}{0.98\textwidth}
\centering
\includegraphics[scale=1.0]{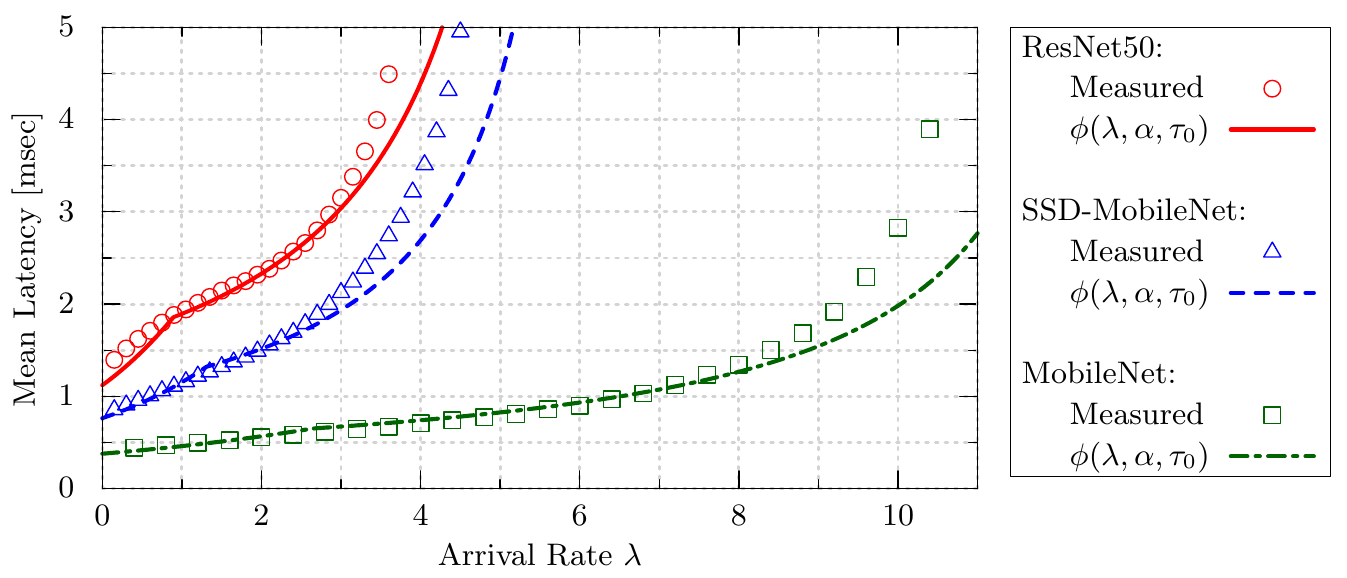}
\caption{Tesla T4, INT8.}
\end{subfigure}
\caption{The mean latency measured using MLPerf Server Scenario,
plotted with the closed-form upper bound $\phi(\lambda,\alpha,\tau_0)$.}
\label{fig:mlperf-latency}
\end{figure}

Regardless of such an irregular behavior of batch processing times,
the latency performance is still well explained by the mathematical
formula we have derived in this paper. We use MLPerf Server Scenario
to measure the mean latency $\E[W]$, where the load generator sends
inference requests to the GPU server according to a Poisson process of
a given arrival rate $\lambda$. The measurement for each parameter are
run for 10 seconds. Fig.\ \ref{fig:mlperf-latency} shows
the mean latency plotted as a function of the arrival rate of
requests. We observe that each latency curve is well explained by the 
closed-form expression $\phi(\lambda,\alpha,\tau_0)$, except for the
region near the stability boundary. Also, we observe that in Tesla T4,
the measured mean latency starts to deviate from the theoretical curve
$\phi(\lambda,\alpha,\tau_0)$ from a relatively small value of
$\lambda$. This phenomenon is due to a hardware limit of the Tesla T4:
this GPU is designed to operate at a low power of 70 [W], which in
turn causes the operating clock to be forcibly lowered (by the SW
Power Cap mechanism) when the computational load becomes excessive. 
Since it is not reasonable to operate the server under such an
excessive load, we can conclude that for the both GPUs and all networks
considered, our closed-form formula explains the mean latency quite
well in practical operating ranges.

\section{Conclusion}
\label{sec:conclusion}

In this paper, we introduced a queueing model representing GPU-based
inference servers with dynamic batching. We modeled an inference
server as a batch-service queueing model with infinite maximum batch
sizes and batch-size dependent processing times. 
We first showed that the energy efficiency of the server increases
with the arrival rate of inference jobs, which suggests that it is
energy-efficient to operate the server under a traffic load as large
as possible, within a latency requirement of inference jobs.
We then derived a simple closed-form upper bound for the mean latency in
Theorem \ref{theorem:EW-special}, under the assumption that 
the batch processing time linearly increases with the batch size.
Through numerical and simulation experiments, we showed that the exact
value of the mean latency is well-approximated by this simple upper bound.
We further compared this formula with the latency curve measured in 
real implementation of GPU-based inference servers, which showed that
the real performance curve is also well explained by the derived simple formula.

\section*{Acknowledgements}
The author would like to thank the anonymous reviewers for their
helpful comments. This work was supported in part by JSPS KAKENHI
Grant Number 18K18007.





\end{document}